\numberwithin{equation}{section}
\theoremstyle{plain}
\newtheorem{thm}{Theorem}
\newtheorem{lemma}{Lemma}
\newtheorem{corollary}{Corollary}
\newtheorem{proposition}{Proposition}
\theoremstyle{definition}
\newtheorem{proc}{Algorithm}
\theoremstyle{remark}
\newtheorem{remark}{Remark}
\newtheorem{assumption}{Assumption}
\newcommand{\convP}{\overset{p}{\longrightarrow}}
\newcommand{\convL}{\overset{L}{\longrightarrow}}
\newcommand{\bz}{{\bf z}}
\newcommand{\bt}{{\bf t}}
\newcommand{\bu}{{\bf u}}
\newcommand{\bv}{{\bf v}}
\newcommand{\bw}{{\bf w}}
\newcommand{\bdr}{{\bf r}}
\newcommand{\bs}{{\bf s}}
\newcommand{\bx}{{\bf x}}
\newcommand{\bA}{{\bf A}}
\newcommand{\bB}{{\bf B}}
\newcommand{\bD}{{\bf D}}
\newcommand{\bS}{{\bf S}}
\newcommand{\bU}{{\bf U}}
\newcommand{\bV}{{\bf V}}
\newcommand{\bX}{{\bf X}}
\newcommand{\bZ}{{\bf Z}}
\newcommand{\btheta}{\mbox{\boldmath$\theta$}}
\newcommand{\bTheta}{\mbox{\boldmath$\Theta$}}
\newcommand{\bbeta}{\mbox{\boldmath$\beta$}}
\newcommand{\bomega}{\mbox{\boldmath$\omega$}}
\newcommand{\bOmega}{\mbox{\boldmath$\Omega$}}
\newcommand{\bSigma}{\mbox{\boldmath$\Sigma$}}
\newcommand{\bRe}{{\mathbb R}}
\begin{document}

\begin{frontmatter}
\title{Statistical Inference on Transformation Models:
a Self-induced Smoothing Approach}
\runtitle{Self-induced Smoothing for Transformation Models}

\begin{aug}
\author{\fnms{Junyi} \snm{Zhang}\ead[label=e1]{jz2299@columbia.edu}},
\author{\fnms{Zhezhen} \snm{Jin}\ead[label=e2]{zjin@biostat.columbia.edu}},\\
\author{\fnms{Yongzhao} \snm{Shao}\ead[label=e3]{yongzhao.shao@nyumc.org}}
\and
\author{\fnms{Zhiliang} \snm{Ying}\ead[label=e4]{zying@stat.columbia.edu}}

\runauthor{J. ZHANG, Z. JIN, Y. SHAO AND Z. YING}

\affiliation{Columbia University and New York University}

\address{
J. Zhang\\
Z. Ying\\
Department of Statistics,\\
Columbia University,\\
2690 Broadway,\\
New York, NY10027, U.S.A.\\
\printead{e1}\\
\printead{e4}}

\address{
Z. Jin\\
Department of Biostatistics,\\
Mailman School of Public Health,\\
Columbia University,\\
New York, NY10032, U.S.A.\\
\printead{e2}}

\address{
Y. Shao\\
Division of Biostatistics,\\
School of Medicine,\\
New York University,\\
New York, NY10016, U.S.A.\\
\printead{e3}}
\end{aug}

\begin{abstract}
This paper deals with a general class of transformation models that contains many important
semiparametric regression models as special cases. It
develops a self-induced smoothing for the maximum rank correlation estimator, resulting
in simultaneous point and variance estimation. The self-induced smoothing does not require
bandwidth selection, yet provides the right amount of smoothness so that the estimator
is asymptotically normal with mean zero (unbiased) and variance-covariance matrix consistently
estimated by the usual sandwich-type estimator. An iterative algorithm is given for the
variance estimation and shown to numerically converge to a consistent limiting variance
estimator. The approach is applied to a data set involving survival times of primary biliary
cirrhosis patients. Simulations results are reported, showing that
the new method performs well under a variety of scenarios.
\end{abstract}

\begin{keyword}[class=AMS]
\kwd{62G20}
\kwd{65C60}
\end{keyword}

\begin{keyword}
\kwd{rank correlation}
\kwd{semiparametric models}
\kwd{self-induced smoothing}
\kwd{variance estimation}
\kwd{algorithmic convergence}
\kwd{asymptotic normality}
\kwd{U-process}
\kwd{Hoeffding's decomposition}
\end{keyword}
\end{frontmatter}

\section{Introduction}
Consider the following class of regression models, with response variable denoted by $Y$
and $(d+1)$-dimensional covariate vector by $\bX$,
\begin{equation}\label{eqn: model}
Y=H(\bX'\bbeta + \varepsilon)
\end{equation}
where $\bbeta$ is the unknown
parameter vector, $\varepsilon$ is the unobserved error term that is
independent of $\bX$ with a completely unspecified distribution,
and $H$ is a monotone increasing, but otherwise unspecified function.

It is easily seen that this class of models contains many commonly used regression models as its submodels that are especially important in the econometrics and survival analysis literature.
For example,
with $H(u)=u$, \eqref{eqn: model} becomes the standard regression model with an unspecified error distribution;
with $H(u)=u^{\lambda}$ ($\lambda>0$), the Box-Cox transformation model (Box and Cox, 1964);
with $H(u)=I[u\geq 0]$, the binary choice model (Maddala, 1983; McFadden, 1984); with  $H(u)=uI[u\geq 0]$,
a censored regression model (Tobin, 1958; Powell, 1984);
with $H(u)=\exp (u)$, the accelerated failure times
(AFT) model (Cox and Oakes, 1984;
Kalbfleisch and Prentice, 2002);
with $\varepsilon$ having an extreme
value density $f(w)=\exp(w-\exp(w))$, the Cox
proportional hazards regression (Cox, 1972);
with $\varepsilon$ having the standard logistic distribution, the proportional
odds regression (Bennett, 1983).

A basic tool for handling model \eqref{eqn: model} is
the maximum rank correlation (MRC) estimator proposed in the econometrics
literature by Han (1987).
Because both the transformation function $H$ and the error distribution are unspecified,
not all components of $\bbeta$ are identifiable. Without loss of generality,
we shall assume henceforth that the last component, $\beta_{d+1}=1$.
Let $(Y_1, \bX_1), ..., (Y_n, \bX_n)$ be a random sample from \eqref{eqn: model}.
Han's MRC estimator, denoted by $\hat{\btheta}_n$,
 is the maximizer of following objective function
\begin{equation}\label{eqn: RC}
Q_n(\btheta)={1 \over n(n-1)}\sum_{i\not=j}
I[{Y_i>Y_j]}I[{\bX_i'\bbeta(\btheta)>\bX_j'\bbeta(\btheta)}],
\end{equation}
where $I[\ \cdot\ ]$ denotes the indicator function, $\bX'$ the transpose of $\bX$, and $\btheta$
the first $d$ components of $\bbeta$, i.e. $\bbeta(\btheta)=(\theta_1,...,\theta_d,1)'$.
Han (1987) proved that the MRC estimator $\hat{\btheta}_n$ is strongly
consistent under certain regularity conditions.

An important subsequent development is due to Sherman (1993), who made use of the empirical process theory and Hoeffding's decomposition
to approximate the objective function, viewed as a U-process.
He showed that $\hat{\btheta}_n$ is, in fact, asymptotically normal under
additional regularity conditions. Estimation of the transformation function $H$ was studied by Chen (2002), who constructed a rank-based estimator and established its consistency and asymptotic normality.

In addition to the econometrics, model \eqref{eqn: model} also encompasses the main semiparametric models
in survival analysis, where right censoring is a major feature. Under the right censorship,
there is a censoring variable $C$ and one observes $\tilde Y=Y \wedge C$ and
 $\Delta_i=I(Y_i \leq C_i)$.
Khan and Tamer (2007) constructed the following partial rank correlation function
as an extension of the rank correlation objective function \eqref{eqn: RC},
\begin{equation} \label{eqn: PRC}
Q_n^*(\btheta)={1 \over n(n-1)}\sum_{i\not=j}
\Delta_j I[{\tilde Y_i>\tilde Y_j]}I[{\bX_i'\bbeta(\btheta)>\bX_j'\bbeta(\btheta)}].
\end{equation}
 They showed that the resulting maximum partial rank correlation estimate (PRCE) $\hat{\btheta}_n^*$,
 as the maximizer of $Q_n^*(\btheta)$, is consistent and asymptotically normal.

Crucial for the statistical inference of \eqref{eqn: model}
based on $\hat\btheta_n$ is the consistent variance estimation.
In standard objective (loss) function derived estimation,
the asymptotic variance is usually
estimated by a sandwich-type estimator of form $\hat \bA^{-1}\hat \bV \hat \bA^{-1}$
with $\hat \bA$ being the second derivative of the objective function
and $\hat \bV$ an estimator of the variance of the first derivative (score).
The challenge here, however, is that
$Q_n$ itself is a (discontinuous) step function that
precludes automatic use of differentiation to obtain $\hat \bA$.
Furthermore, $\hat \bV$ is also difficult to obtain since
the score function cannot be derived directly from $Q_n$ via differentiation.
Sherman(1993) suggested using numerical derivatives of first and second orders
to construct $\hat \bA$ and $\hat \bV$.
His approach requires bandwidth selection for the derivative functions.
It is unclear how stable the resulting variance estimator is. Alternatively, one may resort to
bootstrap (Efron, 1979) or other resampling methods (e.g. Jin et al., 2001). These approaches require
repeatedly solving the maximization of \eqref{eqn: RC}, which is discontinuous and often multidimensional when
$d>1$. The computational cost could therefore be prohibitive.

In this paper, we develop a self-induced smoothing method
for rank correlation criterion function \eqref{eqn: RC} so that the differentiation
can be performed, while bypassing the bandwidth selection.
Both point and variance estimators can be obtained
simultaneously in a straightforward way that is typically used for smooth objective functions.
The new method is motivated by a novel approach proposed in Brown and Wang (2005, 2007),
where an elegant self-induced smoothing method was introduced for non-smooth estimating functions. Although our approach bears similarity with that of Brown and Wang (2005), it is far from clear why such self-induced smoothing is suitable for the discrete objective function (rank correlation).  In fact, undersmoothing would make the Hessian (second derivative) unstable while oversmoothing would introduce significant bias. Through highly technical and tedious derivations, we will show that the proposed method does strike a right balance in terms of asymptotic unbiasedness and enough smoothness for differentiation (twice).

The rest of the paper is organized as follows.
In Section 2, the new methods are described and related large sample properties are developed.
In particular, we give construction for simultaneous point and variance estimation and
show that the resulting point estimator is asymptotically normal and
the variance estimator is consistent.
In Section 3, the approach, along with the algorithm and large sample properties, is extended to handle survival data with right censoring.
Simulation results are reported in Section 4, where application to
a real data set is also given. Section 5 contains some concluding remarks.
Additional technical proofs can be found in the Appendix.

\section{Main Results}

In this section we develop a self-induced smoothing method
for the rank correlation criterion function defined by \eqref{eqn: RC}.
It is divided into three subsections, with the first introducing the method and the algorithm, the second establishing large sample properties and the third covering proofs.

\subsection{Methods}

Since MRC estimator $\hat\btheta_n$ is asymptotically normal (Sherman, 1993),
its difference with the true parameter value,
$\hat \btheta_n - \btheta$,
should approximately be a Gaussian noise $\bZ/\sqrt{n}$, where
 $\bZ \sim N({\bf 0}, \bSigma)$ is a $d$-dimensional normal random vector
with mean $\bf 0$ and covariance matrix $\bSigma$.
Assume that $\bZ$ is independent of data and let $E_{\bZ}$ denote the expectation with respect to $\bZ$ given data.
A self-induced smoothing for $Q_n$ is $\tilde{Q}_n(\btheta)=E_{\bZ} Q_n(\btheta+\bZ/\sqrt{n})$.
The self-induced smoothing using the limiting Gaussian distribution was originally proposed by
Brown and Wang (2005) for certain non-smooth estimating functions.

To get an explicit
form for $\tilde{Q}_n$, let $\Phi$ be the standard normal distribution function,
$\bX_{ij}=\bX_i-\bX_j$, $\sigma_{ij}=\sqrt{(\bX_{ij}^{(1)})'\bSigma \bX_{ij}^{(1)}}$ where
$\bX_{ij}^{(1)}$ denotes the first $d$ components of $\bX_{ij}$. Then, it is easy to see that
\setcounter{equation}{3}
\begin{equation} \label{eqn: SRC}
\tilde{Q}_n(\btheta)=\frac{1}{n(n-1)}\sum_{i\neq j}
I[Y_i>Y_j]\Phi\left(\sqrt{n}\bX_{ij}'\bbeta(\btheta)/\sigma_{ij}\right).
\end{equation}
We shall use $\tilde{\btheta}_n = \arg\max_{\bTheta} \tilde{Q}_n(\btheta)$ to denote the corresponding estimator, which will be called the smoothed maximum rank correlation estimator (SMRCE). Here and in the sequel, $\bTheta$ denotes the parameter space for $\btheta$.

\begin{remark}
Smoothing is an appealing way for a simple solution to the inference problem
associated with the MRCE. If $\tilde{Q}_n$ were a usual smooth objective
function, then its first derivative would become the score function and its second derivative
could be used for variance estimation. Speficically, if we use $\bV$ to denote the limiting variance
of the score scaled by $n$ and $\bA$ the limit of the second derivative, then the asymptotic variance of the resulting estimator, scaled by $ n$, should be of form $\bA^{-1}\bV\bA^{-1}$. A consistent
estimator could then be obtained by the plug-in method, i.e. replacing unknown parameters by their corresponding empirical estimators.
\end{remark}
\begin{remark}
It is unclear, however, whether or not the self-induced smooth will provide
a right amount of smoothing, even in view of the results given in Brown and Wang (2005).
With over-smoothing, $\tilde\btheta_n$ may be asymptotically biased, i.e. the bias is not of
order $o(n^{-1/2})$; with under-smoothing, the ``score'' function (first derivative of
$\tilde{Q}_n$) may have multiple ``spikes'' and thus the second derivative matrix (Hessian) of
$\tilde{Q}_n$ may not behave properly and certainly cannot be expected to provide a consistent
variance estimator.
\end{remark}

In Subsection 2.2, we show that the self-induced smoothing here does result in a right amount of
smoothing in the sense that the bias is asymptotically negligible and the Hessian matrix behave
properly. Before starting the theoretic developments, we first describe our method.

We first differentiate the smoothed objective function $\tilde{Q}_n$ to get score
\[\tilde \bS_n(\btheta)=\frac{1}{n(n-1)}\sum_{i<j}
H_{ij}\phi\left(\frac{\sqrt{n}\bX_{ij}'\bbeta(\btheta)}{\sigma_{ij}}\right)
\frac{\sqrt{n}\bX_{ij}^{(1)}}{\sigma_{ij}},\]
where $H_{ij}=sgn(Y_i-Y_j)$.
This is a U-process of order 2 with kernel
\begin{equation*}\bs_n(\bU_i,\bU_j)=\frac{1}{2}
H_{ij}\phi\left(\frac{\sqrt{n}\bX_{ij}'\bbeta(\btheta)}{\sigma_{ij}}\right)
\frac{\sqrt{n}\bX_{ij}^{(1)}}{\sigma_{ij}},\end{equation*}
where $\bU_i$ denotes the pair $(Y_i,\bX_i)$.

By Hoeffding's decomposition,
the asymptotic variance of $\sqrt n\tilde \bS_n(\btheta)$ is approximated by
\begin{eqnarray}\label{eqn: Vn}
\begin{split}
\hat{\bV}_n(\btheta, \bSigma)
&=\frac{1}{n^3}\sum_{i=1}^n\left\{\sum_{j}
\left[H_{ij}\times\phi\left(\frac{\sqrt{n}\bX_{ij}'\bbeta}{\sigma_{ij}}\right)
\frac{\sqrt{n}\bX_{ij}^{(1)}}{\sigma_{ij}}\right]\right\}^{\otimes2},\\
\end{split}
\end{eqnarray}
where, for a vector $\bv$,  $\bv^{\otimes2}=\bv\bv'$.
Thus, $\hat{\bV}_n(\hat\btheta_n, \bSigma)$ is used to estimate $\bV$, the middle part
of the ``sandwich'' variance formula discussed in Remark 1.

As for $\bA$, we differentiate $\tilde \bS_n(\btheta)$ to get
\begin{equation} \label{eqn: An}
\hat{\bA}_n(\btheta,\bSigma)
=\frac{1}{2n(n-1)}\sum_{i \neq j}\left\{ H_{ij}
\times\dot{\phi}\left(\frac{\sqrt{n}\bX_{ij}'\bbeta}{\sigma_{ij}}\right)
\left[\frac{\sqrt{n}\bX_{ij}^{(1)}}{\sigma_{ij}}\right]^{\otimes2}\right\},
\end{equation}
where $\dot{\phi}(z)=-z\phi(z)$ is the derivative of $\phi(z)$. Although the self-induced smoothing was motivated
earlier with $\bSigma$ being the limiting covariance matrix of the estimator, we will show later that for
any positive definite matrix $\bSigma$,
$\hat{\bA}_n(\hat\btheta_n, \bSigma)$ converges to $\bA$.

Note that the above discussions about $\bA$ and $\bV$ are not mathematically rigorous. This is because
the kernel function for the score process is sample size $n$-dependent. The usual asymptotic theory
for the U-process is not applicable. Indeed, our rigorous derivations, to be given in Subsection 2.3,
are quite tedious, involving many approximations that are quite delicate.

\smallskip
Let
\begin{equation}\label{eqn: var formula}
\hat{\bD}_n(\btheta, \bSigma)=
\hat{\bA}_n^{-1}(\btheta, \bSigma)\times
\hat{\bV}_n(\btheta, \bSigma)\times
\hat{\bA}_n^{-1}(\btheta, \bSigma).
\end{equation}
If $\btheta$ is the true parameter value, then $\hat{\bD}_n(\btheta, \bSigma)$ converges to the limiting covariance
matrix, which is the desired choice for $\bSigma$ in the self-induced smoothing. Therefore, \eqref{eqn: var formula}
leads to an iterative algorithm of form $\hat{\bSigma}_n^{(k)}=\hat\bD_n(\hat{\btheta}_n,\hat{\bSigma}_n^{(k-1)})$;
see also Brown and Wang (2005). Specifically, we propose the following algorithm:

\begin{proc} (SMRCE)
\begin{enumerate}{\leftmargin=-1em}
\item Compute the MRC estimator $\hat{\btheta}_n$ and set $\hat\bSigma^{(0)}$ to be the identity matrix.

\item Update variance-covariance matrix $\hat{\bSigma}_n^{(k)}=\hat\bD_n(\hat{\btheta}_n,\hat{\bSigma}_n^{(k-1)})$.
Smooth the rank correlation $Q_n(\btheta)$ using covariance matrix $\hat{\bSigma}_n^{(k)}$.
Maximize the resulting smoothed rank correlation to get an estimator $\hat{\btheta}_n^{(k)}$.

\item Repeat step 2 until $\hat{\btheta}_n^{(k)}$ converge.
\end{enumerate}
\end{proc}

\subsection{Large-sample properties}
This subsection is devoted to the large sample theory. The main results are: 1. the smoothed MRC estimator (SMRCE)
is asymptotically equivalent to the MRC estimator; 2.
the proposed method leads to a consistent variance estimator; and 3.
the iterative algorithm for point and variance estimation converges numerically.

We first introduce notation as well as assumptions, which are similar to those in Sherman (1993) for
the MRC estimator. Let
\begin{equation}\label{eqn: tau}
\tau(y,\bx,\btheta)=E \left[I_{[y>Y]}I_{[(\bx-\bX)'\bbeta(\btheta)>0]}+
I_{[y<Y]}I_{[(\bx-\bX)'\bbeta(\btheta)<0]}\right], \end{equation}
which  is the projection of the kernel of U-process $Q_n(\btheta)$. The
expectation is taken for $(\bX, Y)$. Also let
\[|\nabla_m |\tau(y, \bx, \btheta)=
\sum_{i_1, ..., i_m}\left| {\partial^m \tau(y, \bx, \btheta)\over
\partial \theta_{i_1}\cdots \partial\theta_{i_m}}\right|. \]

\smallskip
The following Assumptions 1 and 2 are used in Han (1987) (see also Sherman, 1993) to
establish consistency for the MRC estimator. For asymptotic normality, we need
an additional regularity condition (Assumption 3) given in Sherman (1993).
\begin{assumption}
The true parameter value $\btheta_0$ 
is an interior point of $\bTheta$, which is a compact subset of the $d$-dimensional Euclidean space $\bRe^d$.
\end{assumption}
\begin{assumption}
The support of $\bX$ is not contained in any linear subspace of $\bRe^{d+1}$.
Conditional on the first $d$ components of $\bX$, the last component of $\bX$ has a density
function with respect to the Lebesgue measure.
\end{assumption}
\begin{assumption}
There exists a neighborhood, $\cal N$, of $\btheta_0$ such that
for each pair $(y, \bx)$ of possible values of $(Y, \bX)$,

(i) The second derivatives of $\tau(y, \bx; \btheta)$ with respect to $\btheta$
exist in ${\cal N}$.

(ii) There is an integrable function $M_1(y, \bx)$ such that for all $\btheta$ in
${\cal N}$, \[\|\nabla_2\tau(y,\bx; \btheta)-\nabla_2\tau(y,\bx; \btheta_0)\|_2\leq M_1(y,\bx)|\btheta-\btheta_0|.\]

(iii) $E(|\nabla_1|\tau(Y, \bX; \btheta_0))^2 <+\infty.$

(iv) $E|\nabla_2|\tau(Y, \bX; \btheta_0)<+\infty.$

(v) The matrix $E\nabla_2\tau(Y, \bX; \btheta_0)$ is strictly negative definite.
\end{assumption}

\begin{proposition} (Sherman, 1993)
Assume that Assumptions 1-3 hold. We have,
uniformly over any $o_p(1)$ 
neighborhood of $\btheta_0$,
\begin{equation}\label{eqn: QE of gamma}
Q_n(\btheta) -Q_n(\btheta_0)=\frac{1}{2}(\btheta-\btheta_0)'\bA_0(\btheta-\btheta_0)
+\frac{1}{\sqrt{n}}(\btheta-\btheta_0)'{\bf W}_n+O_p(|\btheta-\btheta_0|^3)+o_p(\frac{1}{n}),
\end{equation}
where $ {\bf W}_n= \frac{1}{\sqrt{n}} \sum_i\nabla_1\tau(Y_i, \bX_i;  \btheta_0)$,
$2\bA(\btheta)=E\nabla_2\tau(Y, \bX;  \btheta)$ and $\bA_0=\bA(\btheta_0)$.
Consequently, for the MRC estimator
 $\hat{\btheta}_n$,
\begin{equation}\label{eqn: MRE normality}
 \sqrt{n}(\hat{\btheta}_n-\btheta_0)=A_0^{-1}{\bf W}_n+o_p(1)\convL N({\bf 0}, \bD_0),
\end{equation}
where  $\bD(\btheta)=\bA^{-1} (\btheta)\bV(\btheta) \bA^{-1}(\btheta)$,
$\bV(\btheta)=E(\nabla_1\tau(Y, \bX;  \btheta)[\nabla_1\tau(Y, \bX;  \btheta)]')$ and
$\bD_0=\bD(\btheta_0)$.
\end{proposition}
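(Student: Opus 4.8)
The plan is to treat $Q_n$ as a U-process of order two and to apply Hoeffding's decomposition to $Q_n(\btheta)-Q_n(\btheta_0)$, splitting it into a deterministic part, a linear empirical part, and a completely degenerate second-order remainder. Writing $\Gamma(\btheta)=EQ_n(\btheta)$ and recalling that $\tau$ in \eqref{eqn: tau} is (twice) the first-order projection of the symmetrized kernel, the decomposition takes the schematic form $Q_n(\btheta)-Q_n(\btheta_0)=[\Gamma(\btheta)-\Gamma(\btheta_0)]+(P_n-P)[\tau(\cdot;\btheta)-\tau(\cdot;\btheta_0)]+[R_n(\btheta)-R_n(\btheta_0)]$, where $P_n$ is the empirical measure, $P$ the true one, and $R_n$ the degenerate U-process. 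The whole argument then reduces to analysing these three pieces and showing that the first supplies the quadratic term, the second the linear $\tfrac1{\sqrt n}{\bf W}_n$ term, and the third is negligible.

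For the deterministic part, I would Taylor-expand $\Gamma$ about $\btheta_0$. Because $\btheta_0$ maximizes the population criterion, the first-order condition gives $\nabla\Gamma(\btheta_0)={\bf 0}$, while $\nabla^2\Gamma(\btheta_0)=\bA_0$ from $2\bA(\btheta)=E\nabla_2\tau(Y,\bX;\btheta)$. Assumption 3(i)--(ii) ensures the second derivative exists near $\btheta_0$ and is Lipschitz there, so integrating twice yields $\Gamma(\btheta)-\Gamma(\btheta_0)=\frac12(\btheta-\btheta_0)'\bA_0(\btheta-\btheta_0)+O(|\btheta-\btheta_0|^3)$, exactly the quadratic term in \eqref{eqn: QE of gamma}. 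I would stress that it is Assumption 2 (the conditional Lebesgue density of the last coordinate of $\bX$) that makes $\Gamma$ smooth despite the indicator in the kernel, since the expectation integrates away the discontinuity.

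The technical heart lies in the remaining two pieces, and the degenerate term is the main obstacle. For the linear part I would use stochastic equicontinuity of the empirical process indexed by $\{\tau(\cdot;\btheta)\}$: a one-term Taylor expansion of $\tau$ combined with the $L_2$ bound from Assumption 3(iii) gives $(P_n-P)[\tau(\cdot;\btheta)-\tau(\cdot;\btheta_0)]=(\btheta-\btheta_0)'\frac1n\sum_i\nabla_1\tau(\bU_i;\btheta_0)+o_p(n^{-1/2}|\btheta-\btheta_0|)$, and since $E\nabla_1\tau(\btheta_0)={\bf 0}$ the leading term equals $\frac1{\sqrt n}(\btheta-\btheta_0)'{\bf W}_n$. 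For the degenerate remainder I would show $\sup|R_n(\btheta)-R_n(\btheta_0)|=o_p(n^{-1})$ over shrinking neighborhoods of $\btheta_0$. This is where the argument is delicate: the non-smoothness of the kernel rules out differentiation, so instead I would invoke a maximal inequality for degenerate U-processes (in the spirit of Nolan and Pollard), which applies because the class of halfspace indicators $\{I[\bX'\bbeta(\btheta)>0]\}$ is Euclidean with polynomial covering numbers, and because the $L_2$ norm of the kernel increment shrinks like $|\btheta-\btheta_0|^{1/2}$ as $\btheta\to\btheta_0$. Establishing this negligibility uniformly, rather than pointwise, is the crux.

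Finally, I would assemble the three pieces into \eqref{eqn: QE of gamma}. Consistency of $\hat\btheta_n$ (Han, 1987, under Assumptions 1--2) places it in an $o_p(1)$ neighborhood so that the expansion applies at $\hat\btheta_n$; comparing $Q_n(\hat\btheta_n)\ge Q_n(\btheta_0)$ with the expansion and using the strict negative definiteness of $\bA_0$ (Assumption 3(v)) first yields the rate $\hat\btheta_n-\btheta_0=O_p(n^{-1/2})$. At this rate the cubic and $o_p(n^{-1})$ terms are negligible, and maximizing the leading quadratic-plus-linear form gives $\sqrt n(\hat\btheta_n-\btheta_0)=-\bA_0^{-1}{\bf W}_n+o_p(1)$; since $-{\bf W}_n$ has the same limiting law as ${\bf W}_n$ this agrees with \eqref{eqn: MRE normality}. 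A classical central limit theorem for the i.i.d.\ average ${\bf W}_n$, whose second moment is finite by Assumption 3(iii), gives ${\bf W}_n\convL N({\bf 0},\bV_0)$ with $\bV_0=\bV(\btheta_0)$, and Slutsky's theorem then delivers $\sqrt n(\hat\btheta_n-\btheta_0)\convL N({\bf 0},\bD_0)$.
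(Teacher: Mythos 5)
Your proposal is correct and takes essentially the same route as the paper relies on: the paper gives no proof of Proposition 1, deferring entirely to Sherman (1993), and your argument---Hoeffding decomposition of the U-process, Taylor expansion of the mean term, stochastic equicontinuity for the first-order projection, a Nolan--Pollard/Euclidean-class maximal inequality for the degenerate remainder, and the final consistency-plus-rate argmax step---is precisely Sherman's, with your assembly step being exactly the paper's Lemma 1 in the Appendix (Sherman's Theorem 2). You also correctly observe that maximizing the quadratic-plus-linear form yields $-\bA_0^{-1}{\bf W}_n$ rather than $\bA_0^{-1}{\bf W}_n$; the missing minus sign in \eqref{eqn: MRE normality} appears to be a typo in the paper (its own Lemma 1 carries the correct sign), and, as you note, the limiting distribution is unaffected since $-{\bf W}_n$ and ${\bf W}_n$ have the same Gaussian limit.
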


Because of the standardization,  the rank correlation criterion function $Q_n$ is bounded by
1. It is not difficult to establish a uniform law of large numbers
\begin{equation} \label{eqn: SLLN of Qn}
  \lim_n \sup_{\btheta\in\bTheta}|Q_n(\btheta)-Q(\btheta)|=0, \ a.s.,
\end{equation}
where $Q(\btheta)$ is the expectation of $Q_n(\btheta)$; cf. Han (1987) and Sherman (1993).

\noindent
Likewise, we can show that such uniform convergence also holds for $\tilde{Q}_n$, i.e.
\begin{equation} \label{eqn: SLLN of tQn}
  \lim_n \sup_{\btheta\in\bTheta}|\tilde Q_n(\btheta)-Q(\btheta)|=0, \ a.s.
\end{equation}
Note that the limit $Q$ remains the same.

\smallskip
In the following theorem, we claim that the estimate obtained from maximizing the smoothed
rank correlation function \eqref{eqn: SRC}
is also asymptotically normal with the same asymptotic covariance
matrix as Han's MRCE.

\begin{thm}\label{thy: normality}
  For any given positive definite matrix $\bSigma$, let $ {\widetilde Q_n}(\btheta)$ be defined as in \eqref{eqn: SRC}
and $\widetilde\btheta_n={\rm argmax}_{\btheta} E_Z\Gamma_n(\btheta+Z/\sqrt{n}) {\widetilde Q_n}(\btheta)$.
Then, under Assumptions 1-3, $\widetilde\btheta_n$ is consistent,
$\widetilde\btheta_n\to \btheta_0$ a.s. and asymptotically normal,
$$\sqrt{n}(\widetilde\btheta_n-\btheta_0)
\convL N({\bf 0}, \bD_0),$$
where $\bD_0$ is defined as in Proposition 1.
In addition, $\widetilde\btheta_n$ is asymptotically equivalent to $\hat{\btheta}_n$ in the sense that
$ \widetilde\btheta_n = \hat{\btheta}_n +o_p(n^{-1/2})$.

\end{thm}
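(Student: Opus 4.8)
The plan is to reduce everything to the quadratic expansion of Proposition 1, by showing that the self-induced smoothing preserves the leading linear and quadratic terms of $Q_n$ while leaving behind a remainder that is still $o_p(n^{-1})$. Consistency comes for free: since $\widetilde Q_n$ converges uniformly to the \emph{same} limit $Q$ that governs $Q_n$ (equation \eqref{eqn: SLLN of tQn}), and $Q$ is uniquely and well-separatedly maximized at $\btheta_0$ under Assumptions 1--2 (this is exactly the identifiability underlying Han's consistency proof for $\hat\btheta_n$), the standard argmax theorem for uniformly convergent criterion functions yields $\widetilde\btheta_n\to\btheta_0$ a.s.; combined with Assumption 1 this puts $\widetilde\btheta_n$ eventually in the interior of $\bTheta$.

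The heart of the argument is the expansion. Applying \eqref{eqn: QE of gamma} at the perturbed point $\btheta+\bZ/\sqrt{n}$ and then taking $E_{\bZ}$, I would use $E_{\bZ}\bZ={\bf 0}$ and $E_{\bZ}\bZ\bZ'=\bSigma$ to evaluate the Gaussian moments exactly: the quadratic form contributes $\frac{1}{2}(\btheta-\btheta_0)'\bA_0(\btheta-\btheta_0)+\frac{1}{2n}{\rm tr}(\bA_0\bSigma)$, all cross terms vanish because $E_{\bZ}\bZ={\bf 0}$, and the linear term survives unchanged as $n^{-1/2}(\btheta-\btheta_0)'{\bf W}_n$. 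Subtracting $\widetilde Q_n(\btheta_0)$ cancels the $\btheta$-free constant $\frac{1}{2n}{\rm tr}(\bA_0\bSigma)$, so that
\[
\widetilde Q_n(\btheta)-\widetilde Q_n(\btheta_0)=\frac{1}{2}(\btheta-\btheta_0)'\bA_0(\btheta-\btheta_0)+n^{-1/2}(\btheta-\btheta_0)'{\bf W}_n+o_p(n^{-1}),
\]
uniformly over $o_p(1)$ neighborhoods --- the same expansion, with the same $\bA_0$ and ${\bf W}_n$, that \eqref{eqn: QE of gamma} gives for $Q_n$. This is precisely the statement that the smoothing neither biases the score nor distorts the curvature, which is what Remark 2 flags as nonobvious.

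The step I expect to be the main obstacle is justifying the term-by-term integration, since \eqref{eqn: QE of gamma} holds only uniformly over $o_p(1)$ neighborhoods whereas $E_{\bZ}$ integrates $\btheta+\bZ/\sqrt{n}$ over all of $\bRe^d$ through the unbounded Gaussian $\bZ$. I would handle this by truncation: split $E_{\bZ}$ according to $I(|\bZ|\le M_n)$ and $I(|\bZ|>M_n)$ with $M_n\to\infty$ but $M_n=o(\sqrt{n})$ (e.g.\ $M_n=n^{1/4}$). On $\{|\bZ|\le M_n\}$ one has $|\btheta+\bZ/\sqrt{n}-\btheta_0|\le|\btheta-\btheta_0|+M_n/\sqrt{n}=o_p(1)$, so the perturbed point stays in an $o_p(1)$ neighborhood and both the cubic term and the uniform $o_p(n^{-1})$ remainder of \eqref{eqn: QE of gamma} may be invoked; here $E_{\bZ}|\bZ/\sqrt{n}|^3=O(n^{-3/2})=o(n^{-1})$ and the truncated Gaussian moments differ from the full ones by exponentially small amounts. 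On $\{|\bZ|>M_n\}$ I would use that $Q_n$ is bounded by $1$, so the whole contribution is at most $2\,P(|\bZ|>M_n)\le Ce^{-cM_n^2}=o(n^{-1})$. The delicate part, as the authors warn, is that the score kernel is $n$-dependent, so one must check that the $o_p(n^{-1})$ of \eqref{eqn: QE of gamma} is genuinely uniform enough to survive this integration rather than citing off-the-shelf U-process rates.

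With the expansion established, the conclusion is immediate. Because $\bA_0$ is negative definite (Assumption 3(v)) and the display above is exactly the quadratic approximation that produced \eqref{eqn: MRE normality}, the same argmax argument gives $\sqrt{n}(\widetilde\btheta_n-\btheta_0)=\bA_0^{-1}{\bf W}_n+o_p(1)$. Since the leading term $\bA_0^{-1}{\bf W}_n$ coincides with that of $\hat\btheta_n$, this simultaneously delivers asymptotic normality, $\sqrt{n}(\widetilde\btheta_n-\btheta_0)\convL N({\bf 0},\bD_0)$, and the asymptotic equivalence $\widetilde\btheta_n=\hat\btheta_n+o_p(n^{-1/2})$.
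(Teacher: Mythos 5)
Your proposal is correct and follows essentially the same route as the paper's own proof: truncate the Gaussian perturbation, apply Sherman's uniform quadratic expansion (Proposition 1) at $\btheta+\bZ/\sqrt{n}$, integrate using $E_{\bZ}\bZ={\bf 0}$ and $E_{\bZ}\bZ\bZ'=\bSigma$ so that the constant term $\mathrm{tr}(\bA_0\bSigma)/(2n)$ cancels upon centering at $\btheta_0$, then conclude with the argmax theorem (the paper's Lemma 1) and the uniform convergence \eqref{eqn: SLLN of tQn} for strong consistency. The only cosmetic difference is your truncation radius $M_n=n^{1/4}$ versus the paper's $2d\log n$; both make the tail contribution negligible relative to $n^{-1}$.
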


Recall that \eqref{eqn: var formula} defines the sandwich-type variance estimator by pretending
that ${\widetilde Q_n}$ is a standard smooth objective function. Theorem \ref{thy: var formula} below
shows that \eqref{eqn: var formula} is consistent.


\begin{thm}\label{thy: var formula}
Let $\hat \btheta_n$ be the MRC estimator and
$\hat{\bD}_n$ be defined by \eqref{eqn: var formula}. Then, for any fixed
positive definite matrix $\bSigma$, $\hat \bD_n(\hat{\btheta}_n,\bSigma)$
converges in probability to $\bD_0$, the limiting variance-covariance matrix of
$\sqrt n (\hat{\btheta}_n-\btheta_0)$.
\end{thm}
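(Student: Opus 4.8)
The plan is to reduce the claim to the separate convergence of the two factors in the sandwich, namely $\hat\bA_n(\hat\btheta_n,\bSigma)\convP\bA_0$ and $\hat\bV_n(\hat\btheta_n,\bSigma)\convP\bV_0$, where $\bV_0=\bV(\btheta_0)=E(\nabla_1\tau(Y,\bX;\btheta_0)[\nabla_1\tau(Y,\bX;\btheta_0)]')$. Since $\bA_0$ is strictly negative definite by Assumption 3(v) it is invertible, matrix inversion and multiplication are continuous there, and the continuous mapping theorem then yields $\hat\bD_n=\hat\bA_n^{-1}\hat\bV_n\hat\bA_n^{-1}\convP\bA_0^{-1}\bV_0\bA_0^{-1}=\bD_0$. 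For each factor I would first prove convergence at the fixed true value $\btheta_0$ and then pass to the random argument $\hat\btheta_n$ by combining the consistency $\hat\btheta_n\convP\btheta_0$ from Proposition 1 with a uniform convergence statement over a fixed neighborhood $\mathcal N$ of $\btheta_0$.

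For $\hat\bA_n$, note it is a U-statistic of order two with a symmetric but sample-size-dependent kernel (symmetry holds because $H_{ij}$, $\dot\phi$ of the argument, and $[\sqrt n\bX_{ij}^{(1)}/\sigma_{ij}]^{\otimes2}$ are jointly invariant under swapping $i$ and $j$). I would treat $\sqrt n\,\sigma_{ij}^{-1}\phi(\sqrt n\bX_{ij}'\bbeta/\sigma_{ij})$ as an approximate identity and $\sqrt n\,\sigma_{ij}^{-1}\dot\phi(\cdot)$ as an approximate derivative of a point mass concentrating on the boundary $\{\bX_{ij}'\bbeta=0\}$. Computing the kernel expectation by conditioning on $\bX_i,\bX_j$, and using that the projection $u=\bX_{ij}'\bbeta(\btheta)$ has a conditional density by Assumption 2, the substitution $v=\sqrt n u/\sigma_{ij}$ makes the leading term $\int\dot\phi(v)\,dv=0$ vanish by oddness; the $O(1)$ limit then comes from the first-order Taylor term via the moment identity $\int v\dot\phi(v)\,dv=-\int v^2\phi(v)\,dv=-1$. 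Crucially the factors of $\sigma_{ij}$ and $\sqrt n$ cancel, so the limit does not depend on $\bSigma$ and equals $\tfrac12 E\nabla_2\tau(Y,\bX;\btheta_0)=\bA_0$ for every fixed positive definite $\bSigma$. For the fluctuation I would bound the U-statistic variance by its two Hoeffding contributions: the kernel's second moment is $O(n^{3/2})$ (again by the change of variables, since $\int v^2\phi(v)^2\,dv<\infty$), so the degenerate part contributes $O(n^{-2})\cdot O(n^{3/2})=O(n^{-1/2})$, while the projection $E_{\bU_j}[\,\mathrm{kernel}\mid\bU_i]$ is $O(1)$, so the linear part contributes $O(n^{-1})$; both vanish, giving $\hat\bA_n(\btheta_0,\bSigma)\convP\bA_0$.

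For $\hat\bV_n$, I would view the inner sum $\hat g_{n,i}(\btheta)=n^{-1}\sum_j H_{ij}\phi(\sqrt n\bX_{ij}'\bbeta/\sigma_{ij})\sqrt n\bX_{ij}^{(1)}/\sigma_{ij}$ as a kernel estimate, for fixed $\bU_i$, of (twice) the H\'ajek projection $\nabla_1\tau(Y_i,\bX_i;\btheta_0)$. The conditional bias $E[\hat g_{n,i}\mid\bU_i]-\nabla_1\tau(\bU_i;\btheta_0)$ is controlled by the approximate-identity error (the $\sigma_{ij}$ again cancels, using $\int\phi=1$), and the conditional variance is $O(n^{-1/2})$ because each summand has second moment $O(\sqrt n)$ while there are $n$ of them divided by $n^2$. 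Consequently the estimation noise in $\hat g_{n,i}$ does not inflate the average of squares, and $\hat\bV_n(\btheta_0,\bSigma)=n^{-1}\sum_i\hat g_{n,i}^{\otimes2}\convP E[\nabla_1\tau(\btheta_0)^{\otimes2}]=\bV_0$ by a law of large numbers together with moment/uniform-integrability bounds that let me replace $\hat g_{n,i}^{\otimes2}$ by $\nabla_1\tau(\bU_i;\btheta_0)^{\otimes2}$.

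To replace $\btheta_0$ by $\hat\btheta_n$ I would upgrade the two pointwise limits to convergence uniform over $\mathcal N$, via maximal inequalities for the $n$-dependent U- and empirical processes, so that the limiting maps $\btheta\mapsto\bA(\btheta)$ and $\btheta\mapsto\bV(\btheta)$ are approached uniformly and are continuous at $\btheta_0$ under the smoothness in Assumption 3; combining this with $\hat\btheta_n\convP\btheta_0$ finishes the proof. This last point is essential rather than cosmetic: because a perturbation of $\btheta$ of order $n^{-1/2}$ shifts the argument of $\dot\phi$ by an $O(1)$ amount, one cannot simply compare $\hat\bA_n(\hat\btheta_n,\bSigma)$ to $\hat\bA_n(\btheta_0,\bSigma)$ termwise, and the uniform-convergence-to-a-continuous-limit route is needed instead. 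The main obstacle is the $\hat\bA_n$ step itself: the kernel diverges pointwise at rate $n$ so standard U-process theory does not apply, and establishing both the exact matching of the expectation to $\bA_0$ (the cancellation that kills the $O(\sqrt n)$ leading term and extracts the density derivative through $\int v\dot\phi=-1$) and the vanishing of the variance despite the $O(n^{3/2})$ second moment requires careful scale-tracking estimates, whose uniform version over $\mathcal N$ is the most delicate piece.
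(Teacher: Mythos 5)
Your high-level skeleton (factorwise limits $\hat\bA_n(\hat\btheta_n,\bSigma)\convP\bA_0$ and $\hat\bV_n(\hat\btheta_n,\bSigma)\convP\bV(\btheta_0)$, then continuity of matrix inversion/multiplication) matches the paper, but your route to the two limits is genuinely different. The paper never analyzes the $n$-dependent U-statistic kernels directly: it writes $\tilde Q_n(\btheta)=\int Q_n(\bt)K_n(\bt,\btheta)\,d\bt$ with the Gaussian kernel $K_n$, lets both derivatives fall on $K_n$, localizes the integrals to sets $\bOmega_{n,r}$ of radius $O(\sqrt{\log n/n})$, and then substitutes Sherman's quadratic expansion of $Q_n$ (Proposition 1), so that $[\bA(\btheta_0)]_{r,s}$ is extracted purely from Gaussian integral identities and integration by parts (Lemma 2). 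Likewise $[\hat\bV_n]_{r,s}$ is written as $\int G_n(\bt,\bomega)\dot K_{n,r}\dot K_{n,s}$ with $G_n$ a product of empirical projections, Hoeffding's decomposition plus Sherman's bounds for degenerate U-statistics kill the higher-order terms, and integration by parts plus a law of large numbers for $\nabla_1\tau$ (Lemma 3) give $\bV(\btheta_0)$. You instead attack the kernels head-on as approximate identities (moment identities $\int\dot\phi(v)\,dv=0$, $\int v\dot\phi(v)\,dv=-1$; conditional bias/variance bounds for the inner sums of $\hat\bV_n$). Several of your observations are correct and show you see where the difficulty lies: the cancellation of $\sigma_{ij}$ and $\sqrt n$ explaining why any fixed positive definite $\bSigma$ works, the $O(n^{3/2})$ kernel second moment, and the fact that plugging in $\hat\btheta_n$ cannot be done by termwise comparison with $\hat\bA_n(\btheta_0,\bSigma)$.

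The gap is that your Taylor-expansion steps are not licensed by the paper's Assumptions 1--3. The conditioning argument (you mean conditioning on $\bX_i^{(1)},\bX_j^{(1)}$ and the $Y$'s --- conditioning on both full covariate vectors fixes the index, so your statement as written is inconsistent) and the first-order expansion of the conditional index density times $E[H_{ij}\mid\cdot]$, as well as the bounds $E\,\dot\phi^2(\sqrt n u/\sigma_{ij})=O(n^{-1/2})$ and $E\,\phi^2(\cdot)=O(n^{-1/2})$, all require the conditional density of $\bX_{ij}'\bbeta(\btheta)$ to exist, be bounded, and be differentiable, uniformly over the conditioning variables. Assumption 2 gives only existence of a conditional density; Assumption 3 imposes smoothness on the \emph{integrated} projection $\tau$, not on any density --- indeed, the paper's Appendix B presents exactly such density-level smoothness as a \emph{sufficient} condition for Assumption 3, i.e.\ strictly stronger than what is assumed. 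The paper's proof is engineered to avoid this: all of the needed regularity enters only through the expansion of $Q_n$ in Proposition 1 and the Lipschitz property of $\nabla_2\tau$. So, as written, your argument establishes the theorem only under additional regularity hypotheses. Relatedly, your final plug-in step rests on unspecified maximal inequalities giving uniformity over a fixed neighborhood $\mathcal N$; this is both more than is needed (uniformity over $\|\btheta-\btheta_0\|=O(n^{-1/2})$ suffices, since $\hat\btheta_n$ is $\sqrt n$-consistent, and that is what the paper proves) and is precisely the ``most delicate piece'' you defer, whereas in the paper it is carried by the uniformity already built into Proposition 1 and the Lemma 2/Lemma 3 estimates rather than by new process-theoretic machinery.
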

\begin{remark}
The self-induced smoothing uses the limiting covariance matrix
$\bD_0$ as $\bSigma$. In practice, we may initially choose the identity matrix for $\bSigma$,
which is the same way as the initial step in algorithm I.
By Theorem 2.1, we know that the one-step estimator $\hat\bSigma_n^{(1)}$ in
algorithm I converges in probability to the true covariance.
However, this one-step estimator depends on the initial choice of $\bSigma$.
Algorithm 1 is an iterative algorithm with the variance-covariance estimator converging to the fixed point of
$\hat{\bD}_n(\hat\btheta_n, \bSigma)=\bSigma $.
\end{remark}

Convergence of Algorithm 1 is ensured by the following theorem.
For notational simplicity, we let $vech(\bB)$ be the vectorization of matrix $\bB$.
For any function $v$ of $\bSigma$,
\[\bigg|\frac{\partial}{\partial \bSigma}\bigg| v =
\sum_{i,j} \bigg|\frac{\partial}{\partial \Sigma_{i,j}} v\bigg|, \quad
\frac{\partial v}{\partial \bSigma}=
(\frac{\partial v}{\partial\Sigma_{1,1}},\frac{\partial v}{\partial\Sigma_{2,1}},...,\frac{\partial v}{\partial\Sigma_{d,d}})',\]
where $\bSigma_{r,s}$ denotes the $(r,s)$ entry of $\bSigma$.

\begin{thm}\label{thy: alg converg}
Let $\hat\bSigma_n^{(k)}$ be defined as in Algorithm 1.
Suppose that Assumptions 1-3 hold. Then there exist $\bSigma^*_n$, $n\geq1$, such that
for any $\epsilon>0$, there exists $N$, such that
for all $n>N$,
\[P(\lim_{k\to\infty} \hat{\bSigma}_n^{(k)}=\bSigma^*_n, \ \ \|\bSigma_n^*-\bD_0\|<\epsilon)
>1-\epsilon.\]
\end{thm}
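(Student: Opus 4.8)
The plan is to recast the iteration $\hat\bSigma_n^{(k)}=\hat\bD_n(\hat\btheta_n,\hat\bSigma_n^{(k-1)})$ as the repeated application of a single data-dependent map and to show that, with probability tending to one, this map is a contraction on a small ball around $\bD_0$. The Banach fixed-point theorem will then deliver simultaneously the convergence of $\hat\bSigma_n^{(k)}$ and the proximity of its limit to $\bD_0$. Fix a closed ball $\mathcal{B}=\{\bSigma:\|\bSigma-\bD_0\|\le\delta\}$ of positive definite matrices and set $T_n(\bSigma)=\hat\bD_n(\hat\btheta_n,\bSigma)$. Since $\bA_0$ is strictly negative definite by Assumption 3(v) and $\hat\bA_n(\hat\btheta_n,\bSigma)\to\bA_0$, the matrix $\hat\bA_n(\hat\btheta_n,\bSigma)$ is invertible uniformly on $\mathcal{B}$ for large $n$, so $T_n$ is well defined there.

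First I would upgrade Theorem \ref{thy: var formula} from pointwise to uniform convergence, namely $\sup_{\bSigma\in\mathcal{B}}\|T_n(\bSigma)-\bD_0\|\convP 0$. The argument mirrors the proof of that theorem, the key structural fact being that the limits of $\hat\bA_n$ and $\hat\bV_n$ are $\bA_0$ and $\bV(\btheta_0)$, both \emph{independent of} $\bSigma$; uniformity over the compact set $\mathcal{B}$ follows from the continuity of the integrands in $\bSigma$. An immediate consequence is that $T_n$ maps $\mathcal{B}$ into the $(\delta/2)$-ball about $\bD_0$, hence into $\mathcal{B}$, with probability tending to one. Because Algorithm 1 starts at $\hat\bSigma^{(0)}=I$, the first iterate $\hat\bSigma_n^{(1)}=T_n(I)$ already lies in $\mathcal{B}$ with high probability by Theorem \ref{thy: var formula} applied at $\bSigma=I$, so it suffices to analyze the iteration once it has entered $\mathcal{B}$.

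The crux is to show that $T_n$ is a contraction, which I would extract from the Jacobian bound
\[\sup_{\bSigma\in\mathcal{B}}\left\|\frac{\partial\, vech\,T_n(\bSigma)}{\partial\, vech(\bSigma)}\right\|\convP 0.\]
Differentiating $\hat\bD_n=\hat\bA_n^{-1}\hat\bV_n\hat\bA_n^{-1}$ in an entry $\Sigma_{r,s}$ reduces, through the product and inverse rules, to controlling $\partial\hat\bV_n/\partial\Sigma_{r,s}$ and $\partial\hat\bA_n/\partial\Sigma_{r,s}$. Each acts through $\partial\sigma_{ij}/\partial\Sigma_{r,s}=(\bX_{ij}^{(1)})_r(\bX_{ij}^{(1)})_s/(2\sigma_{ij})$ and brings down an additional power of the normalized argument $\sqrt n\,\bX_{ij}'\bbeta(\btheta)/\sigma_{ij}$ together with a higher derivative of $\phi$. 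This is the step I expect to be the main obstacle: the kernels are $n$-dependent, so standard U-process limit theory does not apply, and differentiating in $\bSigma$ produces integrands even more singular than those in Theorem \ref{thy: var formula}, so the same delicate truncation and localization near the set where $\bX_{ij}'\bbeta(\btheta)\approx0$ must be pushed through once more. The payoff is that, because the limits of $\hat\bA_n$ and $\hat\bV_n$ carry no dependence on $\bSigma$, the limits of these derivative U-processes are zero, giving the displayed bound.

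Granting the Jacobian bound, convexity of $\mathcal{B}$ and the mean value inequality yield, for large $n$, a Lipschitz constant $L_n<1$ for $T_n$ on $\mathcal{B}$ (indeed $L_n\convP0$). On the high-probability event where $T_n$ maps $\mathcal{B}$ into itself and is an $L_n$-contraction there, the Banach fixed-point theorem produces a unique fixed point $\bSigma_n^*\in\mathcal{B}$ with $\hat\bSigma_n^{(k)}\to\bSigma_n^*$ as $k\to\infty$. Finally, $\bSigma_n^*=T_n(\bSigma_n^*)$ together with Step 1 gives $\|\bSigma_n^*-\bD_0\|\le\sup_{\bSigma\in\mathcal{B}}\|T_n(\bSigma)-\bD_0\|\convP0$, so for any prescribed $\epsilon>0$ the three events---self-mapping of $\mathcal{B}$, the contraction property, and $\|\bSigma_n^*-\bD_0\|<\epsilon$---hold simultaneously with probability exceeding $1-\epsilon$ once $n$ is large, which is exactly the assertion of the theorem.
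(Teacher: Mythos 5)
Your proposal is correct and takes essentially the same approach as the paper: the crux you identify---that the $\bSigma$-derivative of $\hat{\bD}_n(\btheta,\bSigma)$ is uniformly $o_p(1)$ over a neighborhood of $\bD_0$, proved by rerunning the truncation/localization analysis of Theorem \ref{thy: var formula} and exploiting that the limits $\bA(\btheta_0)$ and $\bV(\btheta_0)$ do not depend on $\bSigma$---is exactly the paper's Lemma 4 (and Lemma 5). The paper then combines this with the mean value theorem and an induction showing the increments $[\hat\bSigma_n^{(k+1)}-\hat\bSigma_n^{(k)}]_{r,s}$ contract geometrically, which is just your Banach fixed-point argument written out by hand.
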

\begin{remark}
For a fixed $n$,  $\bSigma_n^*$ represents the fixed point matrix in the iterative algorithm. The above theorem
shows that with probability approaching 1, the iterative algorithm converges to a limit, as $k\rightarrow \infty$,
and the limit converges in probability to the limiting covariance matrix $\bD_0$.
\end{remark}
\begin{remark}
The speed of convergence of $\hat\bSigma_n^{(k)}$ to $\bSigma_n^*$ is faster than any exponential rate in the sense that  $\|\hat\bSigma_n^{(k)}-\bSigma_n^*\|=o(\eta^k)$ for any $\eta>0$.
This can be seen from Step 2 of Algorithm 1 in Subsection 2.1 and \eqref{eqn: fast convergence} below,
\begin{equation}\label{eqn: fast convergence}
\sup_{\|\theta-\theta_0\|=o(1),\Sigma\in\mathcal{N}( D_0)}
\bigg|\frac{\partial}{\partial \bSigma}\bigg|[\hat{\bD}_n(\btheta,\bSigma)]_{r,s}=o_p(1),
\end{equation}
which will be proved in the Appendix. Here $\mathcal{N}(\bD_0)$ is a small neighborhood of $\bD_0$
and $\bSigma$ is a positive definite matrix.
\end{remark}


\subsection{Proofs}

In this section, we provide proofs for (1) asymptotic equivalence of
SMRCE to MRCE, (2) consistency of the induced variance estimator and
(3) convergence of Algorithm 1.
Some of the technical developments used in the proofs will be given in the Appendix.


\begin{proof} [Proof of Theorem \ref{thy: normality}]
Without loss of generality, we assume $\btheta_0={\bf 0}.$
As in Subsection 2.1,
let $\bZ$ be a $d$-variate
 normal random vector with mean ${\bf 0}$ and covariance matrix $\bSigma$.
Define
$$ {\widetilde  Q_n}(\btheta)= E_{\bZ} Q_n(\btheta+\bZ/\sqrt{n}).$$
Let
$\Gamma_n(\btheta)=Q_n(\btheta)- Q_n(\btheta_0)$ and
$\widetilde \Gamma_n(\btheta)=E_{\bZ} \Gamma_n(\btheta+\bZ/\sqrt{n})={\widetilde  Q_n}(\btheta)-Q_n(\btheta_0)$.
Define
$$ \widetilde{\btheta}_n={\rm argmax}_{\btheta} \left[ {\widetilde  Q_n}(\btheta)\right]
={\rm argmax}_{\btheta} {\widetilde\Gamma_n}(\btheta).$$
Let $\bOmega_n=I[\|\bZ\|_2>2d {\log n}]$, where $\|\bZ\|_2= \sqrt{\bZ'\bZ}.$
Then $P(\bOmega_n)=o(n^{-2})$ due to the Gaussian tail of $\bZ$.
Since $|Q_n(\btheta)|\leq1$ and $|\Gamma_n(\btheta)|\leq2$,
\[|E_{\bZ} \{\Gamma_n(\btheta+\bZ/\sqrt{n})I[\bOmega_n]\}|\leq  P(\bOmega_n)=o(n^{-2}).\]

\noindent
By the Cauchy-Schwarz inequality,
\[E_{\bZ} \{|\bZ| I[\bOmega_n]\}=o(n^{-2}) \mbox{ and } E_{\bZ} \{|\bZ|^2 I[\bOmega_n]\}=o(n^{-2}).\]

\noindent
By \eqref{eqn: QE of gamma},
uniformly over $o(1)$ neighborhoods of ${\bf 0}$,
\begin{align*}
E_{\bZ} \{\Gamma_n(\btheta &+\bZ/\sqrt{n})I[\bOmega_n^c]\}
=(1/2)E_{\bZ} \{(\btheta+\bZ/\sqrt{n})'\bA_0(\btheta+\bZ/\sqrt{n})I[\bOmega_n^c]\} \\
&+(1/\sqrt{n})E_{\bZ} \{(\btheta+\bZ/\sqrt{n})'{\bf W}_nI[\bOmega_n^c]\}
+o_p(E_{\bZ} \{|\btheta+\bZ/\sqrt{n}|^2I[\bOmega_n^c]\}+\frac{1}{n}).
\end{align*}

Note that
\[E_{\bZ} \{|\btheta+\bZ/\sqrt{n}|^2I[\bOmega_n^c]\} \leq 2
(E_{\bZ}|\btheta|^2+ E_{\bZ}|\bZ|^2/n)=O(|\btheta|^2+1/n).\]

\noindent
Therefore, uniformly over $o(1)$
neighborhoods of ${\bf 0}$,
we have
\begin{equation}\label{eqn: tilde gamma}
\widetilde \Gamma_n(\btheta)=(1/2)\btheta'\bA_0\btheta
+(1/\sqrt{n})\btheta'{\bf W}_n + E(\bZ'\bA_0\bZ)/2n+o_p(|\btheta|^2+1/n).
\end{equation}
Replacing $\btheta$ in \eqref{eqn: tilde gamma} with $\btheta_0={\bf 0}$ and
subtracting it from $\widetilde \Gamma_n(\btheta)$, we have
\begin{equation}
\widetilde\Gamma_n(\btheta)-\widetilde\Gamma_n(\btheta_0) 
=\frac{1}{2}\btheta'\bA_0\btheta
+\frac{1}{\sqrt{n}}\btheta'{\bf W}_n+o_p(|\btheta|^2+1/n).
\end{equation}
Combining (15) with Lemma 1 in the Appendix, we get,
\begin{equation}\label{eqn: SMRE normality}
\sqrt{n}(\widetilde\btheta_n-\btheta_0)=
\bA_0^{-1}{\bf W}_n+o_p(1).
\end{equation}
Therefore, from \eqref{eqn: MRE normality} and \eqref{eqn: SMRE normality},
we have
$$ \sqrt{n}(\hat\btheta_n-\widetilde\btheta_n)=o_p(1).$$
Finally, strong consistency of $\widetilde\btheta_n$ follows the uniform
almost sure convergence of $\tilde Q_n$ as stated in \eqref{eqn: SLLN of tQn}.
This completes the proof.
\end{proof}


\begin{proof}[Proof of Theorem \ref{thy: var formula}]
For notational simplicity, we assume throughout the proof that
$\bSigma$ is the identity matrix. The same argument with modifications
to include constants for up and lower bound may be applied to deal with
a general covariance matrix $\bSigma$.

We first show
\begin{equation}\label{eqn: An conv}
\hat \bA_n(\hat \btheta_n) \convP \bA(\btheta_0).
\end{equation}
\noindent
By definition, $[\hat \bA_n(\btheta)]_{r,s}=\partial^2\tilde Q_n(\btheta)/(\partial\theta_r\partial\theta_s)$.
\smallskip
\noindent
As defined in \eqref{eqn: SRC}$, \tilde Q_n(\btheta)$ has the following integral representation,
\begin{eqnarray*}
\begin{split}
\tilde Q_n(\btheta)=\int Q_n(\btheta+\bz/\sqrt{n})(2\pi)^{-\frac{d}{2}}\exp(-\frac{\|\bz\|_2^2}{2})d\bz.
\end{split}
\end{eqnarray*}
By change of variable $\bt=\btheta+\bz/\sqrt{n}$,
\begin{equation}\label{eqn: Q integral}
\tilde Q_n(\btheta)=\int Q_n(\bt) K_n(\bt,\btheta)d\bt,
\end{equation}

\noindent
$\mbox{where }\displaystyle K_n(\bt,\btheta)=(2\pi)^{-\frac{d}{2}} n^{\frac{d}{2}}\exp(-\frac{n\|\bt-\btheta\|_2^2}{2})$.
From \eqref{eqn: Q integral},
\[\frac{\partial}{\partial \theta_r}\tilde Q_n(\btheta)
=\int Q_n(\bt)\dot{K}_{n,r}(\bt,\btheta)d\bt\]
\noindent
and
\[\frac{\partial^2}{\partial\theta_r\partial\theta_s}\tilde Q_n(\btheta)
=\int Q_n(t)\ddot{K}_{n,r,s}(\bt,\btheta)d\bt,\]
where
$\dot{K}_{n,r}(\bt,\btheta)=\partial K_n(\bt,\btheta)/\partial\theta_r$ and
$\ddot{K}_{n,r,s}(\bt,\btheta)=\partial^2 K_n(\bt,\btheta)/(\partial\theta_r\partial\theta_s)$.

In view of \eqref{eqn: An}, to show \eqref{eqn: An conv}, it suffices to prove
\begin{equation}\label{eqn: Ars}
\int Q_n(\bt)\ddot{K}_{n,r,s}(\bt,\btheta)d\bt = [\bA(\btheta_0)]_{r,s} + o_p(1)\end{equation}
uniformly over $\|\btheta-\btheta_0\|=O(n^{-1/2})$.
To show \eqref{eqn: Ars}, we define
\[\bOmega_{n,r}=\left\{\bt: (t_r-\theta_r)^2<\frac{4\log n}{n},
\sum_{i \neq r}(t_i-\theta_i)^2 < \frac{2(d-1)\log n}{n}\right\}.\]

By Lemma 2(i) and the boundedness of $Q_n(\bt)$, we have,
\[ \int_{(\bOmega_{n,r}\cap\bOmega_{n,s})^c} Q_n(\bt)\ddot{K}_{n,r,s}(\bt,\btheta)d\bt=o(n^{-1/2}),\]
where ${\mathfrak B}^c$ for set ${\mathfrak B}$ denotes its complement.
Therefore, \eqref{eqn: Ars} reduces to
\begin{equation}\label{eqn: Arsp}
\int_{\bOmega_{n,r}\cap\bOmega_{n,s}} Q_n(\bt)\ddot{K}_{n,r,s}(\bt,\btheta)d\bt = [\bA(\btheta_0)]_{r,s} + o_p(1).
\end{equation}

To show \eqref{eqn: Arsp}, we establish a quadratic expansion of $Q_n(\bt)$ for
$\bt \in \bOmega_{n,r}\cap\bOmega_{n,s}$.
Since $\|\bt-\btheta\|_2< \sqrt{4d\log n/n}$ for $\bt \in \bOmega_{n,r}\cap\bOmega_{n,s}$
and $\|\btheta-\btheta_0\|_2=O(n^{-1/2})$, it follows that $\|\bt-\btheta_0\|_2=o(1)$.
Therefore, by \eqref{eqn: QE of gamma},
\begin{eqnarray}\label{eqn: QE of Qn}
\begin{split}
Q_n(\bt)&=
Q_n(\btheta_0) + \frac{1}{2}(\bt-\btheta_0)'\bA(\btheta_0)(\bt-\btheta_0)\\
&+(\bt-\btheta_0)'{\bf W}_n/\sqrt{n}+O_p(|\bt-\btheta_0|^3) + o_p(1/n).
\end{split}
\end{eqnarray}
Therefore, the left hand side of \eqref{eqn: Arsp} equals $\bf{I}+ \bf{II}+ \bf{III}+ \bf{IV}$, where
\begin{eqnarray*}
\begin{split}
\bf{I}&=\int_{\bOmega_{n,r}\cap\bOmega_{n,s}}\left[O_p(|\bt-\btheta_0|^3) + o_p(1/n)\right]\ddot{K}_{n,r,s}(\bt,\btheta)d\bt,\\
\bf{II}&=Q_n(\btheta_0)\times\int_{\bOmega_{n,r}\cap\bOmega_{n,s}}\ddot{K}_{n,r,s}(\bt,\btheta)d\bt,\\
\bf{III}&=\frac{{\bf W}_n'}{\sqrt{n}}\times\int_{\bOmega_{n,r}\cap\bOmega_{n,s}}(\bt-\btheta_0)\ddot{K}_{n,r,s}(\bt,\btheta)d\bt,\\
\bf{IV}&=\frac{1}{2}\int_{\bOmega_{n,r}\cap\bOmega_{n,s}}(\bt-\btheta_0)'\bA(\btheta_0)(\bt-\btheta_0)\ddot{K}_{n,r,s}(\bt,\btheta)d\bt.
\end{split}
\end{eqnarray*}

By the definition of $\bOmega_{n,r}$,
\[|{\bf I}|\leq\bigg|O_p\left(\frac{(\log n)^{\frac{3}{2}}}{n\sqrt{n}}\right)+o_p(1/n)\bigg| \times
\int_{\bOmega_{n,r}\cap\bOmega_{n,s}} |\ddot{K}_{n,r,r}(\bt,\btheta)|d\bt.\]
By Lemma 2(ii), ${\bf I}=o_p(1)$. Furthermore,
$\displaystyle {\bf II}=o(n^{-1/2})$ due to Lemma 2(iii).
Note that
\begin{eqnarray}\label{eqn: split III}
\begin{split}
\int_{\bOmega_{n,r}\cap\bOmega_{n,s}} (\bt-\btheta_0)\ddot{K}_{n,r,s}(\bt,\btheta)d\bt & =
\int_{\bOmega_{n,r}\cap\bOmega_{n,s}} (\bt-\btheta)\ddot{K}_{n,r,s}(\bt,\btheta)d\bt \\
& + (\btheta-\btheta_0)\int_{\bOmega_{n,r}\cap\bOmega_{n,s}} \ddot{K}_{n,r,s}(\bt,\btheta)d\bt\\
& = (\btheta-\btheta_0)\int_{\bOmega_{n,r}\cap\bOmega_{n,s}} \ddot{K}_{n,r,s}(\bt,\btheta)d\bt,
\end{split}
\end{eqnarray}
where the last equality follows from the fact that $\bOmega_{n,r}$ and $\bOmega_{n,s}$
are symmetric at $\btheta$ and $(\bt-\btheta)\ddot{K}_{n,r,s}(\bt,\btheta)$ is
an odd function of $[\bt-\btheta]_{r}$ for $r=1,2,...,d$.
Combining this with Lemma 2(i), we have $\bf{III}=o(n^{-1})$.
Again by symmetry,
\begin{eqnarray}\label{eqn: split IV}
\begin{split}
\int_{\bOmega_{n,r}\cap\bOmega_{n,s}} & (\bt-\btheta_0)'\bA(\btheta_0)(\bt-\btheta_0)\ddot{K}_{n,r,s}(\bt,\btheta) d\bt \\
& = \int_{\bOmega_{n,r}\cap\bOmega_{n,s}}(\bt-\btheta)'\bA(\btheta_0)(\bt-\btheta)\ddot{K}_{n,r,s}(\bt,\btheta) d\bt \\
& + (\btheta-\btheta_0)'\bA(\btheta_0)(\btheta-\btheta_0)\int_{\bOmega_{n,r}\cap\bOmega_{n,s}}\ddot{K}_{n,r,s}(\bt,\btheta) d\bt.
\end{split}
\end{eqnarray}
By Lemma 2 (i) and (iv), ${\bf IV}=[\bA(\btheta_0)]_{r,s} + o(n^{-1/2})$.
Combining the approximations for $\bf{I} - \bf{IV}$,
we get \eqref{eqn: Arsp}.

\smallskip
Next we prove $\hat \bV_n(\hat \theta_n) \convP  \bV(\theta_0)$ by showing, componentwise,
\begin{equation}\label{eqn: Vrs}
[\hat \bV_n(\btheta)]_{r,s} = [\bV(\btheta_0)]_{r,s} + o_p(1)
\end{equation}
uniformly over $\|\btheta-\btheta_0\|=O(n^{-1/2})$ for $r,s=1,...,d$.

Define

\centerline{$q(\bu, \tilde \bu; \btheta)=
I_{[y>\tilde y]}I_{[(\bx-\tilde \bx)'\bbeta>0]} + I_{[y<\tilde y]}I_{[(\bx-\tilde \bx)'\bbeta<0]},$}

\smallskip
\noindent
where $\bu=(y,\bx)$ and $\tilde \bu=(\tilde y, \tilde \bx)$. In addition, let
$\displaystyle \tau_n(\bu,\btheta)=
\int q(\bu, \tilde \bu; \btheta)\mathbb{F}_n(d\tilde \bu),$
where $\mathbb{F}_n(\cdot)$ is the empirical distribution for $\bu_i$'s.
By definition,
\begin{eqnarray*}
\begin{split}
[\hat \bV_n(\btheta)]_{r,s}
& =\frac{1}{n}\sum_{i=1}^n
\left[\frac{\partial}{\partial\theta_r} \int \tau_n(\bu_i,\btheta+\frac{\bz}{\sqrt{n}})
(2\pi)^{-\frac{d}{2}}e^{-\frac{\|\bz\|_2^2}{2}}d\bz\right]\\
& \quad \quad \times
\left[\frac{\partial}{\partial\theta_s} \int \tau_n(\bu_i,\btheta+\frac{\tilde \bz}{\sqrt{n}})
(2\pi)^{-\frac{d}{2}}e^{-\frac{\|\tilde \bz\|_2^2}{2}}{d\tilde\bz}\right].
\end{split}
\end{eqnarray*}

\noindent
Letting $\bt=\btheta+\bz/\sqrt{n}$ and $\bomega=\btheta+\tilde \bz/\sqrt{n}$, we have
\begin{eqnarray*}
\begin{split}
[\hat \bV_n(\btheta)]_{r,s}
& =\frac{1}{n}\sum_{i=1}^n
\frac{\partial}{\partial\theta_r} \int \tau_n(\bu_i,\bt)K_n(\bt,\btheta)d\bt
\times  \frac{\partial}{\partial\theta_r} \int \tau_n(\bu_i,\bomega) K_n(\bomega,\theta)d\bomega\\
& = \frac{1}{n}\sum_{i=1}^n
\int \tau_n(\bu_i,\bt)\dot{K}_{n,r}(\bt,\btheta)d\bt \times
    \int \tau_n(\bu_i,\bomega)\dot{K}_{n,s}(\bomega,\btheta)d\bomega\\
&=\int G_n(\bt,\bomega) \dot{K}_{n,r}(\bt,\btheta)\dot{K}_{n,s}(\bomega,\btheta)d\bt d\bomega,
\end{split}
\end{eqnarray*} where
$G_n(\bt,\bomega) = \frac{1}{n}\sum_{i=1}^n\tau_n(\bu_i,\bt)\tau_n(\bu_i,\bomega)$,
which is bounded by 0 and 1.
By Lemma 2 (vii),
\begin{eqnarray}\label{eqn: Vnp}
\begin{split}
[\hat \bV_n(\btheta)]_{r,s}&=o(n^{-\frac{1}{2}})\\
&+\int_{\bOmega_{n,r}\times\bOmega_{n,s}}
G_n(\bt,\bomega) \dot{K}_{n,r}(\bt,\btheta)\dot{K}_{n,s}(\bomega,\btheta)d\bt d\bomega
\end{split}
\end{eqnarray}
uniformly over $\|\btheta-\btheta_0\|=O(n^{-\frac{1}{2}})$. Let
$f(\bu,\bv,\bw;\btheta_1,\btheta_2)=q(\bu,\bv;\btheta_1)\times q(\bu,\bw;\btheta_2)$ and
$f^*(\bu,\bv,\bw;\btheta_1,\btheta_2)$, the symmetrized $f$.
By definition,
\begin{eqnarray}\label{eqn: Gnp}
\begin{split}
G_n(\btheta_1,\btheta_2)&=\frac{1}{{n \choose 3}}\sum_{i<j<k} f^*(\bu_i,\bu_j,\bu_k;\btheta_1,\btheta_2)\\
    &+\frac{1}{n}\times\frac{1}{{n \choose 2}}\sum_{i<j} f^*(\bu_i,\bu_j,\bu_j;\btheta_1,\btheta_2)
    \triangleq U_n + \frac{1}{n} \tilde U_n.
\end{split}
\end{eqnarray}
Clearly $U_n$ is a third-order U-statistics and $\tilde U_n$ is a second-order U-statistics.
Applying Hoeffding's decomposition (van der Vaart, 1998, section 12.3),
\begin{equation}\label{eqn: Un Hoeffding}
U_{n} = \sum_{c=0}^3 {{3}\choose{c}} U_{n,c},
\end{equation}
where $U_{n,c}$ is a U-statistics of order $c$ ($c=0,1,2,3$) and defined as
\[U_{n,c}=\frac{1}{{{3}\choose{c}}}\sum_{|B|=c} \frac{1}{{{n}\choose{3}}}
\sum_{i} P_B\left[f^*(\bu_{i_1},\bu_{i_2},\bu_{i_3};\btheta_1,\btheta_2)\right].\]
Here, adopting the notations from van der Vaart (1998, Section 11.4), we define
$P_B\left[f^*(\bu_{i_1},\bu_{i_2},\bu_{i_3};\btheta_1,\btheta_2)\right]$
as a projection of $f^*$ such that
\begin{eqnarray*}
\begin{split}
P_{{\o}} f^* &= E f^*, \\
P_{\{i\}} f^* & = E [f^* | \bu_i] - E f^*, \\
P_{\{i,j\}} f^* & = E [f^* | \bu_i, \bu_j] - E [f^* | \bu_i] - E [f^* | \bu_j] + E f^*,\\
P_{\{1,2,3\}} f^* & = E [f^* | \bu_1, \bu_2, \bu_3] - \sum_{i\neq j} E [f^* | \bu_i, \bu_j] + \sum_{i=1,2,3}E [f^* | \bu_i] - E f^*.
\end{split}
\end{eqnarray*}
We know from Hoeffding's decomposition that
$U_{n,2}$ and $U_{n,3}$ are second- and third-order degenerated U-statistics
with bounded kernels
and thus of order $o_p(n^{-1})$ and $o_p(n^{-3/2})$; see Sherman (1992, Corollary 8).
Therefore, by Lemma 2(vi),
\begin{equation}\label{eqn: integrating Un}
\int_{\bOmega_{n,r}\times\bOmega_{n,s}} U_{n,c}
\dot{K}_{n,r}(\bt,\btheta)\dot{K}_{n,s}(\bomega,\btheta)d\bt d\bomega=o_p(1), \mbox{ for } c=2,3.
\end{equation}
Replacing $U_{n,c}$ by $\tilde U_n/n$ in \eqref{eqn: integrating Un} also results in $o_p(1)$.
Then combining this and \eqref{eqn: integrating Un} with \eqref{eqn: Gnp} and \eqref{eqn: Un Hoeffding},
\eqref{eqn: Vnp} reduces to
\begin{equation}\label{eqn: Vnpp}
\begin{split}
[\hat \bV_n(\btheta)]_{r,s} &= 3 \times\int_{\bOmega_{n,r}\cap\bOmega_{n,s}}
    U_{n,1}\times\dot{K}_{n,r}(\bt,\btheta)\dot{K}_{n,s}(\bomega, \btheta)d\bt d\bomega \\
&+\int_{\bOmega_{n,r}\cap\bOmega_{n,s}} Ef\times\dot{K}_{n,r}(\bt,\btheta)\dot{K}_{n,s}(\bomega,\btheta)d\bt d\bomega + o_p(1).
\end{split}
\end{equation}
Let $f_1(\bu_j;\bt, \bomega)=E[f(\bu,\bv,\bw;\bt,\bomega)|\bu=\bu_j]$,
$f_2(\bv_j;\bt, \bomega)=E[f(\bu,\bv,\bw;\bt,\bomega)|\bv=\bv_j]$ and
$f_3(\bw_j;\bt, \bomega)=E[f(\bu,\bv,\bw;\bt,\bomega)|\bw=\bw_j]$. We define
$\displaystyle \tilde G_n(\bt,\bomega) = \frac{1}{n}\sum_{j=1}^n f_1(\bu_j;\bt, \bomega)$.
By the definitions of $f(\bu,\bv,\bw;\bt,\bomega)$ and $q(\bu,\bv;\btheta)$,
we have $\displaystyle \tilde G_n(\bt,\bomega)=\frac{1}{n}\sum_{i=1}^n \tau(\bu_i,\bt)\tau(\bu_i,\bomega)$.
By Lemma 3 and applying integration by parts twice,
\begin{eqnarray*}
\begin{split}
\int_{\bOmega_{n,r} \times \bOmega_{n,s}}
& \tilde G_n(\bt,\bomega)\dot{K}_{n,r}(\bt,\btheta)\dot{K}_{n,s}(\bomega,\btheta)d\bt d\bomega= o(n^{-1/2})\\
+ & \int_{\tilde\bOmega_{n,r} \times \tilde\bOmega_{n,s}} \left\{\frac{1}{n}\sum_{i=1}^n
\frac{\partial\tau(\bu_i,\btheta+\frac{\bz}{\sqrt{n}})}{\partial\theta_r}
\frac{\partial\tau(\bu_i,\btheta+\frac{\tilde \bz}{\sqrt{n}})}{\partial\theta_s}\right\}
\prod_i d\Phi(z_i)d\Phi(\tilde z_i) ,
\end{split}
\end{eqnarray*}
where $\tilde\bOmega_{n,r}:= \{\bz: z_r^2<4\log n, \sum_{i \neq r}z_i^2 < 2(d-1)\log n\}$.
By Lemma 3,

\[\frac{1}{n}\sum_{i=1}^n\frac{\partial}{\partial\theta_r}\tau(\bu_i,\btheta_1^*)\frac{\partial}{\partial\theta_s} \tau(\bu_i,\btheta_2^*)=[\bV(\btheta_0)]_{r,s}+o_p(1)\]
uniformly over $\{(\btheta_1^*, \btheta_2^*): \|\btheta_i^*-\btheta_0\|_2=o(1), i=1,2\}$.
Therefore,
\[\int_{\bOmega_{n,r} \times \bOmega_{n,s}} \frac{1}{n}\sum_{j=1}^n f_1(\bu_j;\bt,\bomega)\times
\dot{K}_{n,r}(\bt,\btheta)\dot{K}_{n,s}(\bomega,\btheta)d\bt d\bomega = [\bV(\btheta_0)]_{r,s}+o_p(1).\]

\noindent
Similarly, applying integration by parts and by Lemma 3 and 2(vi), we have
\[\int_{\bOmega_{n,r} \times \bOmega_{n,s}} \frac{1}{n}\sum_{j=1}^n f_2(\bv_j;\bt,\bomega)\times
\dot{K}_{n,r}(\bt,\btheta)\dot{K}_{n,s}(\bomega,\btheta)d\bt d\bomega = [\bV(\btheta_0)]_{r,s}+o_p(1),\]
\[\int_{\bOmega_{n,r} \times \bOmega_{n,s}} \frac{1}{n}\sum_{j=1}^n f_3(\bw_j;\bt,\bomega)\times
\dot{K}_{n,r}(\bt,\btheta)\dot{K}_{n,s}(\bomega,\btheta)d\bt d\bomega = [\bV(\btheta_0)]_{r,s}+o_p(1),\]
\[\int_{\bOmega_{n,r} \times \bOmega_{n,s}} E[f(\bu,\bv,\bw;\bt,\bomega)]\times
\dot{K}_{n,r}(\bt,\btheta)\dot{K}_{n,s}(\bomega,\btheta)d\bt d\bomega = [\bV(\btheta_0)]_{r,s}+o_p(1).\]
Hence the right hand side of \eqref{eqn: Vnpp} is $[\bV(\btheta_0)]_{r,s}+o_p(1)$, which gives \eqref{eqn: Vrs}.
From \eqref{eqn: An conv} and \eqref{eqn: Vrs},
$\hat \bD_n(\hat\btheta_n)\convP  \bD_0$.
\end{proof}


\begin{proof}[Proof of Theorem \ref{thy: alg converg}]
From Theorem \eqref{thy: var formula}, we know that
$\hat{\bSigma}_n^{(1)}\convP  \bD_0$ and
 $\hat\bSigma_n(\hat\btheta_n, \bD_0) \convP  \bD_0$.
By the mean value theorem,
\begin{eqnarray*}
\begin{split}
[\hat{\bSigma}_n^{(2)} - \bD_0]_{r,s}
&= [\hat\bSigma_n(\hat\btheta_n, \hat{\bSigma}_n^{(1)}) - \hat\bSigma_n(\hat\btheta_n, \bD_0)]_{r,s}
    + [\hat\bSigma_n(\hat\btheta_n, \bD_0) - \bD_0]_{r,s}\\
&= \left[\frac{\partial}{\partial\bSigma}[\hat{\bD}_n]_{r,s}\bigg|_{\bSigma=\bSigma^*}\right]'
    \times vech(\hat{\bSigma}_n^{(1)}-\bD_0)
+ [\hat\bSigma_n(\hat\btheta_n, \bD_0) - \bD_0]_{r,s},
\end{split}
\end{eqnarray*}
where $\|\bSigma^*-\bD_0\|\leq\|\hat{\bSigma}_n^{(1)}-\bD_0\|$ and thus $\bSigma^* \in \mathcal{N}
( \bD_0)$.
In view of Lemma 4 and $\hat\bSigma_n(\hat\btheta_n, \bD_0) \convP  \bD_0$,
$\hat{\bSigma}_n^{(2)} \convP  \bD_0$.
Again by the mean value theorem,
$$[\hat{\bSigma}_n^{(k+1)}-\hat{\bSigma}_n^{(1)}]_{r,s}=
\left[\frac{\partial}{\partial \bSigma}[\hat{ \bD}_n]_{r,s}\bigg|_{\bSigma=\bSigma^*}\right]'
\times vech(\hat{\bSigma}_n^{(k)}-\bD_0),$$
where $\|\bSigma^*-\bD_0\|\leq\|\hat{\bSigma}_n^{(k)}-\bD_0\|$. Then by Lemma 4 and
mathematical induction, we know that for any $\epsilon>0$ and $\eta>0$, there exist $K$ and $N$,
such that for any $n>N$ and $k>K$,
$$P\left(\big|[\hat\bSigma_n^{(k+1)}-\hat\bSigma_n^{(k)}]_{r,s}\big| \leq \eta \times
\big|[\hat\bSigma_n^{(k)}-\hat\bSigma_n^{(k-1)}]_{r,s}\big|, \mbox{ for all } k>K\right)>1-\epsilon,$$
where $1\leq s,r\leq d$.
Note that the inequality inside the above probability implies that
$\hat{\bSigma}_n^{(k)}$ converges as $k\rightarrow\infty$ and the limit $\bSigma_n^*$
satisfies $\bSigma_n^*=\hat{ \bD}_n(\hat\btheta_n,\bSigma_n^*)$ and $\bSigma_n^* \convP \bD_0$.
\end{proof}


\section{Extensions}

In this section, we extend the approach to
the partial rank correlation (PRC)
criterion function $Q_n^*$, defined by (3), of Khan and Tamer (2007) for censored data.
Under the usual conditional independence between failure and censoring times given covariates and
additional regularity conditions, Khan and Tamer (2007)
developed asymptotic properties for PRCE that are parallel to those by Sherman (1993).

The same self-induced smoothing can be applied to partial rank correlation criteria function
to get
\setcounter{equation}{28}

\begin{eqnarray}\label{eqn: SPRC}
\begin{split}
\tilde{Q}^*_n(\btheta)&=E_{\bZ} Q^*_n(\btheta+\bZ/\sqrt{n})\\
&=\frac{1}{n(n-1)}\sum_{i\neq j}
\Delta_j I[\tilde Y_i > \tilde Y_j]\Phi\left(\sqrt{n}\bX_{ij}'\bbeta(\btheta)/\sigma_{ij}\right).
\end{split}
\end{eqnarray}
We define its maximizer, $\tilde{\btheta}^*_n$,
as the smoothed partial rank correlation estimator (SPRCE).
Let
\begin{equation}
\quad \hat{\bA}^*_n(\btheta,\bSigma)
=\frac{1}{2n(n-1)}\sum_{i \neq j}\left\{ H_{ij}\times\dot{\phi}\left(\frac{\sqrt{n}\bX_{ij}'\bbeta}{\sigma_{ij}}\right)
\left[\frac{\sqrt{n}\bX_{ij}^{(1)}}{\sigma_{ij}}\right]^{\otimes2}\right\},
\end{equation}
\begin{eqnarray}
\begin{split}
\quad\quad \hat{\bV}^*_n(\btheta, \bSigma)
=\frac{1}{n^3}\sum_{i=1}^n\left\{\sum_{j} \left[H_{ij}\times\phi\left(\frac{\sqrt{n}\bX_{ij}'\bbeta}{\sigma_{ij}}\right)
\frac{\sqrt{n}\bX_{ij}^{(1)}}{\sigma_{ij}}\right]\right\}^{\otimes2}, \\
\end{split}
\end{eqnarray}
\begin{equation}\label{eqn: SPRCE var}
\hat \bD_n^*(\btheta,\bSigma)=
[\hat \bA^*_n(\btheta, \bSigma)]^{-1}\times
\hat \bV_n^*(\btheta, \bSigma)\times
[\hat \bA^*_n(\btheta, \bSigma)]^{-1},\end{equation}
where $H_{ij}=\Delta_j \times I[\tilde Y_i > \tilde Y_j]-\Delta_i \times I[\tilde Y_j > \tilde Y_i]$.

\smallskip
Based on $\hat \bD_n^*(\btheta,\bSigma)$, we have the following iterative algorithm
to compute the SPRCE and variance estimate simultaneously.

\begin{proc} (SPRCE)
\begin{enumerate}{\leftmargin=-1em}
\item Compute the PRC estimator $\hat{\btheta}^*_n$ and set $\hat\bSigma^{(0)}$ to be the identity matrix.

\item Update variance-covariance matrix $\hat{\bSigma}_n^{*(k)}=\hat \bD_n^*(\hat{\btheta}^*_n,\hat{\bSigma}_n^{*(k-1)})$.
Smooth the partial rank correlation $Q^*_n(\btheta)$
using covariance matrix $\hat{\bSigma}_n^{*(k)}$. Maximize the resulting smoothed partial rank correlation to get
an estimator $\hat{\btheta}_n^{*(k)}$.

\item Repeat step 2 until $\hat{\btheta}_n^{*(k)}$ converge.
\end{enumerate}
\end{proc}

In addition to Assumptions 1-3, Khan and Tamer (2007) added the following
assumption for the consistency of PRCE.
\begin{assumption}
Let $\bS_X$ be the support of $\bX_i$, and $\mathfrak{X}_{uc}$ be the set
\[\mathfrak{X}_{uc}=\{\bx\in \bS_X: P(\Delta_i=1|\bX_i=\bx)>0\}.\]
Then $P(\mathfrak{X}_{uc})>0$.
\end{assumption}

\smallskip
Similar to the rank correlation function, it can be shown that
under Assumptions 1-4, \eqref{eqn: QE of gamma} and
\eqref{eqn: SLLN of Qn}
 still hold for partial rank correlation function
$Q_n^*(\btheta) $.
Therefore, Theorems 1-3 in Section 2 continue to hold
when replacing
the point and variance estimators for smoothed rank correlation
by the corresponding ones for the
smoothed partial rank correlation.
Specifically, for any positive definite matrix $\bSigma$,
under Assumptions 1-4, we have

\begin{enumerate}
\item The SPRCE $\widetilde\btheta_n^*$ is asymptotically equivalent to the PRCE $\hat{\btheta}_n^*$
in the sense that
$ \widetilde\btheta_n^* = \hat{\btheta}_n^* +o_p(n^{-1/2})$, and, therefore,
\[\sqrt{n}(\widetilde\btheta_n^*-\theta_0)\convL N({\bf 0}, \bD_0^*),\]
where $\bD_0^*$ is the limiting variance-covariance matrix of
$\hat{\btheta}_n^*$.

\item Variance estimator is consistent: $\hat \bD_n^*(\hat{\btheta}_n^*,\bSigma) \convP \bD_0^*$.

\item Algorithm 2 converges numerically in the sense that
there exist $\bSigma^*_n$, $n\geq1$, such that
for any $\epsilon>0$, there exists $N$, such that
for all $n>N$,
$P(\lim_{k\to\infty} \hat{\bSigma}_n^{*(k)}=\bSigma^*_n, \ \ \|\bSigma_n^*-\bD_0^*\|<\epsilon)
>1-\epsilon.$
\end{enumerate}

\smallskip
The proofs are similar to
those of Theorems 1-3 in Section 2, and are, therefore, omitted.

\section{Numerical results}
In this section, we first apply the proposed self-induced smoothing method to analyze
the primary biliary cirrhosis (PBC) data (Fleming and Harrington, 1990, Appendix D)
and compare the result with that using the Cox regression. We then report results from several
simulation studies we conducted using the method.

\subsection{PBC data}
We applied smoothed PRCE to the survival times of the first 312 subjects with no missing covariates
in the PBC data. We included two covariates albumin and age50 (age divided by 50).
We reparameterized the transformation model (1)
by setting $\beta_{age50}$ as 1, and estimated $\theta_{albumin}$ by SPRCE.
We also calculated PRCE for $\theta_{albumin}$ and fitted the standard Cox model.
For the Cox regression, the ratio $\hat\beta_{albumin}/\hat\beta_{age50}$ is the estimate of $\theta_{albumin}$.
The results are summarized in Table 1.
\begin{center}{Table 1\\Regression Analysis of PBC data}\end{center}
        \begin{center}
            \begin{tabular}{cccccc}

                  & Albumin & SE \\
            \hline\hline
            SPRCE & -4.29 & 1.40 \\

            PRCE  & -3.50 & - \\

            Cox  & -3.04 & 0.60 \\

            \end{tabular}
        \end{center}

\medskip
Note that PRCE does not have a readily available standard error estimate.
The standard error of $\hat\beta_{albumin}/\hat\beta_{age50}$ in the Cox model
was estimated by the delta method.
Estimates from both the SPRCE and the Cox model conclude that the ratio of
$\beta_{albumin}$ to $\beta_{age50}$ is significant.

\begin{figure}[h]
\centering
\includegraphics[width=4.5in]{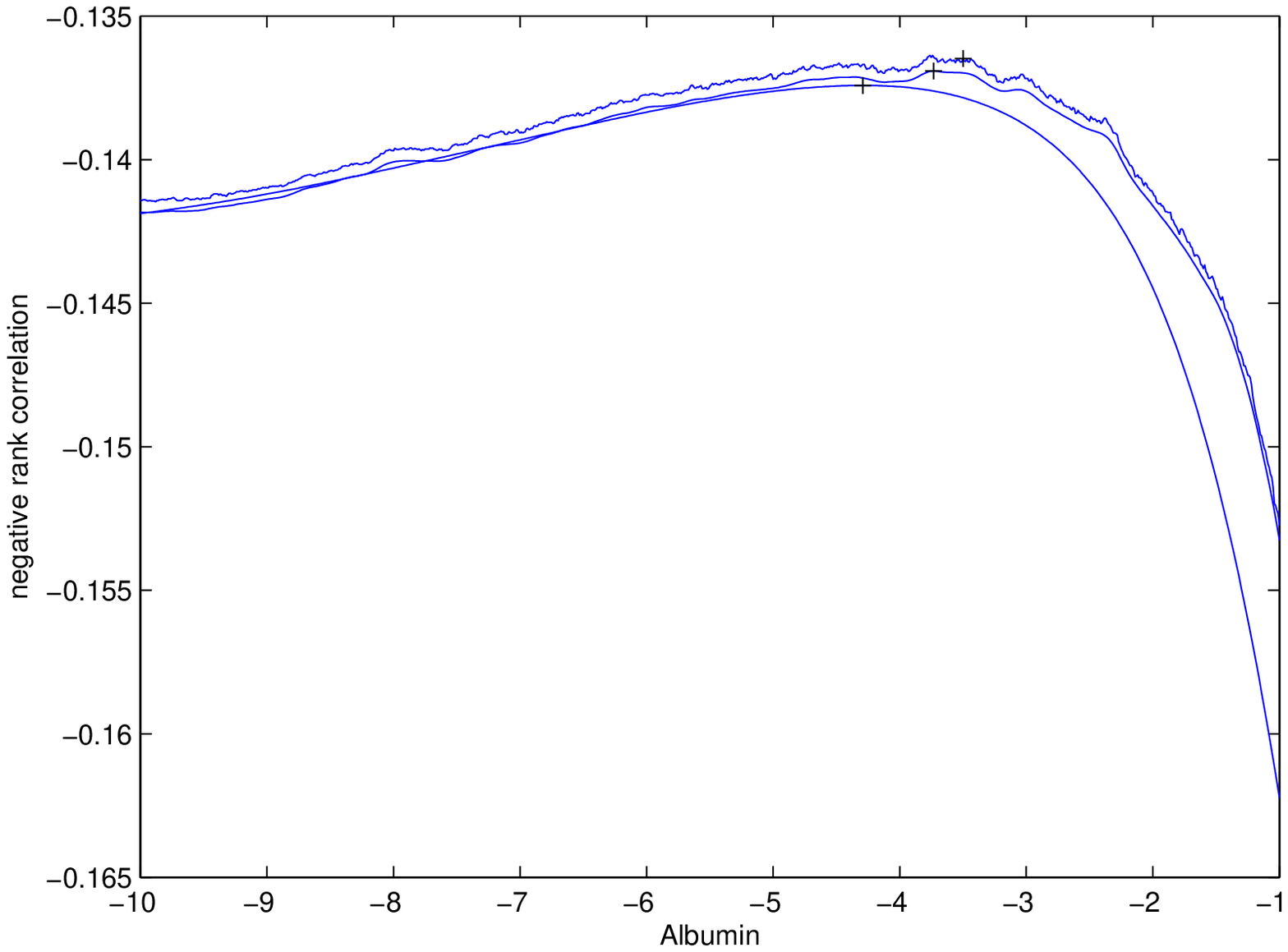}
\caption{}
\end{figure}

To further assess the self-induced smoothing procedure, we plot
the original objective function as well as the smoothed
one in the first and last steps of our algorithm, as shown in Figure 1.
The top curve is the original objective function, the middle curve
is one after the initial smoothing, and the bottom curve
is the limit of the iterative algorithm (after 8 iterations).
It appears that the one-step smoothed objective function is under-smoothed in terms of
the level of fluctuations, and the limiting curve is quite smooth.

\begin{center}{Table 2\\The proportional hazard model without censoring}\end{center}
        \begin{center}
            \begin{tabular}{ccccccc}

            $n=500$ & Est & Mean & Bias & RMSE & SE & coverage\\
            \hline\hline
            $\theta$   & SMRCE & 1.601 & $1.2\times10^{-3}$ & 0.0298 & 0.0316 & 92.3$\%$ \\

                        & MRCE  & 1.601 & $0.8\times10^{-3}$ & 0.0340 & - & - \\

                        & Cox  & 1.599 & $-0.9\times10^{-3}$ & 0.0200 & - & - \\

            \end{tabular}
        \end{center}
        \begin{center}
            \begin{tabular}{ccccccc}

            $n=1000$& Est & Mean & Bias & RMSE & SE & coverage\\
            \hline\hline
            $\theta$   & SMRCE & 1.601 & $1.0\times10^{-3}$ & 0.0193 & 0.0212 & 93.9$\%$ \\

                        & MRCE  & 1.600 & $0.2\times10^{-3}$ & 0.0225 & - & - \\

                        & Cox  & 1.600 & $0.1\times10^{-3}$ & 0.0141 & - & - \\

            \end{tabular}
        \end{center}
        \begin{center}
            \begin{tabular}{ccccccc}

            $n=2000$& Est & Mean & Bias & RMSE & SE & coverage\\
            \hline\hline
            $\theta$   & SMRCE & 1.600 & $0.2\times10^{-3}$ & 0.0136 & 0.0144 & 94.9$\%$ \\

                        & MRCE  & 1.600 & $-0.1\times10^{-3}$ & 0.0158 & - & - \\

                        & Cox  & 1.600 & $0.1\times10^{-3}$ & 0.0100 & - & - \\

            \end{tabular}
        \end{center}
\medskip
\begin{center}{Table 3\\The proportional hazard model with censoring}\end{center}
        \begin{center}
            \begin{tabular}{ccccccc}

            $n=600$ & Est & Mean & Bias & RMSE & SE & coverage\\
            \hline\hline
            $\theta$   & SPRCE & 1.604 & $3.7\times10^{-3}$ & 0.0282 & 0.0300 & 93.2$\%$ \\

                        & PRCE  & 1.603 & $2.9\times10^{-3}$ & 0.0327 & - & - \\

                        & Cox  & 1.601 & $1.0\times10^{-3}$ & 0.0204 & - & - \\

            \end{tabular}
        \end{center}
        \begin{center}
            \begin{tabular}{ccccccc}

            $n=1200$& Est & Mean & Bias & RMSE & SE & coverage\\
            \hline\hline
            $\theta$   & SPRCE & 1.601 & $1.1\times10^{-3}$ & 0.0190 & 0.0201 & 93.9$\%$ \\

                        & PRCE  & 1.601 & $0.8\times10^{-3}$ & 0.0217 & - & - \\

                        & Cox  & 1.600 & $-0.2\times10^{-3}$ & 0.0139 & - & - \\

            \end{tabular}
        \end{center}
        \begin{center}
            \begin{tabular}{ccccccc}

            $n=2400$& Est & Mean & Bias & RMSE & SE & coverage\\
            \hline\hline
            $\theta$ & SPRCE & 1.600 & $0.4\times10^{-3}$ & 0.0127 & 0.0136 & 95.4$\%$ \\

            & PRCE  & 1.600 & $0.1\times10^{-3}$ & 0.0148 & - & - \\

            & Cox  & 1.600 & $-0.2\times10^{-3}$ & 0.0097 & - & - \\

            \end{tabular}
        \end{center}

\newpage
\begin{center}{Table 4\\The linear model with gaussian noise}\end{center}
        \begin{center}
            \begin{tabular}{c c c c c c c }

            $n=250$ & Est & Mean & Bias & RMSE & SE & coverage\\
            \hline\hline
            $\theta_1$   & SMRCE & 1.615 & $1.5\times10^{-2}$ & 0.0747 & 0.0756 & 91.7$\%$ \\

                        & MRCE  & 1.612 & $1.2\times10^{-2}$ & 0.0730 & - & -\\

                        & LS  & 1.601 & $ 0.7\times10^{-3}$ & 0.0296 & - & -\\
            \hline
            $\theta_2$   & SMRCE & .5042 & $0.4\times10^{-2}$ & 0.0427 & 0.0443 & 93.6$\%$ \\

                        & MRCE  & .5058 & $0.5\times10^{-2}$ & 0.0423 & - & -\\

                        & LS  & .5006 & $ 0.6\times10^{-3}$ & 0.0354 & - & -\\

            \end{tabular}
        \end{center}
                \begin{center}
            \begin{tabular}{c c c c c c c }

             $n=500$ & Est & Mean & Bias & RMSE & SE & coverage\\
            \hline\hline
            $\theta_1$   & SMRCE & 1.605 & $4.9\times10^{-3}$ & 0.0515 & 0.0513 & 92.7$\%$ \\

                        & MRCE  & 1.607 & $6.7\times10^{-3}$ & 0.0523 & - & -\\

                        & LS  & 1.601 & $ 0.7\times10^{-3}$ & 0.021 & - & -\\
            \hline
            $\theta_2$   & SMRCE & .5023 & $2.3\times10^{-3}$ & 0.0296 & 0.0302 & 94.6$\%$ \\

                        & MRCE  & .5042 & $4.2\times10^{-3}$ & 0.0316 & - & -\\

                        & LS  & .5006 & $0.6\times10^{-3}$ & 0.0254 & - & -\\

            \end{tabular}
        \end{center}
                \begin{center}
            \begin{tabular}{c c c c c c c }

             $n=1000$ & Est & Mean & Bias & RMSE & SE & coverage\\
            \hline\hline
            $\theta_1$   & SMRCE & 1.603 & $3.6\times10^{-3}$ & 0.0361 & 0.0348 & 92.4$\%$ \\

                        & MRCE  & 1.603 & $3.4\times10^{-3}$ & 0.0382 & - & -\\

                        & LS  & 1.601 & $0.5\times10^{-3}$ & 0.0144 & - & -\\
            \hline
            $\theta_2$   & SMRCE & .5009 & $0.9\times10^{-3}$ & 0.0203 & 0.0207 & 94.8$\%$ \\

                        & MRCE  & .5018 & $1.8\times10^{-3}$ & 0.0214 & - & -\\

                        & LS  & .5004 & $0.4\times10^{-3}$ & 0.0176 & - & -\\

            \end{tabular}
        \end{center}

\subsection{Simulation studies}

We conducted simulation studies for a number of cases. In the first case (Design I),
we generated $\bX$ from a bivariate normal distribution with mean $[-10, 20]'$ and a covariance matrix diag$\{3^2, 2^2\}$.
We set $\bbeta_0^T=(\theta, 1)=[1.6, 1]$ and generated $\epsilon$ from the probability density function
$f(w)=2\exp(2w-\exp(2w))$.
We set the transformation $H(x)$ as $H^{-1}(y)=log(y^2)$. This is indeed a Weibull proportional hazard model.
The sample sizes were $n=500,1000,2000$ and the numbers of replications were $500$.
The SMRCE, MRCE and Cox model were used to estimate $\theta$,
and the standard error of SMRCE was computed by Algorithm 1.
The mean(Mean), bias(Bias) and root mean square error(RMSE) for
each method as well as mean of standard error(SE) and
coverage of $95\%$ confidence interval for the SMRCE are reported in Table 2.

The second case (Design II) is similar to the first one except
that $Y$ is censored by a random variable $C$, which is independent of $\bX$ and normally distributed with
mean $\mu=9.2$ and variance $\sigma^2=0.5^2$. The sample sizes were $n=600,1200,2400$ and the numbers of replications were $500$.
This design is similar to that in G{\o}rgens and Horowitz (1999).
The SPRCE, PRCE and Cox model were used to estimate $\theta$, and
the standard error of SPRCE was computed by Algorithm 2.
The resulting estimates are summarized in Table 3 where we also report
bias(Bias), root mean square error(RMSE), mean of standard error(SE), and coverage of $95\%$ confidence interval.

In the third case (Design III), we generated $\bX=[X_1, X_2, X_3]'$ by two steps. We first
generated $[X_1,X_3]'$ from a bivariate normal distribution with mean $\left[-2,2\right]'$ and an identity covariance matrix.
We then generated $X_3$ as 0 or 2 with equal probability. We set $\bbeta_0^T=(\theta_1,\theta_2, 1)=[1.6, 0.5, 1]$
and generated $\epsilon$ from a normal distribution with $\mu=0$ and $\sigma^2=0.5^2$.
We set the transformation $H(x)=x$. The sample sizes were $n=250,500,1000$ and the numbers of replications were $500$.
The SMRCE, MRCE and least squared method were applied to estimate $\theta_1$ and $\theta_2$,
and the standard error of SMRCE was computed by Algorithm 1.
Table 4  reports
the mean (Mean), bias (Bias) and root mean square error (RMSE) for each method
as well as mean of standard error (SE) and coverage of $95\%$ confidence interval for the SMRCE.

From Tables 2, 3 and 4, we find that (1) the root mean squared error is close to
the mean standard error for the SMRCE (SPRCE); (2) as the sample size
increases, the bias reduces and the coverage of $95\%$ confidence interval converges to
the nominal level. These show that the proposed variance estimator is accurate and Algorithms 1 and 2 work well.

\section{Discussion}
This paper provides a simple yet general recipe for smoothing the
discontinuous rank correlation criteria function. The smoothing is
self-induced in the sense that the implied bandwidth is
essentially the asymptotic standard deviation of the regression
parameter estimator. It is shown that such smoothing does not
introduce any significant bias in that the resulting estimator is
asymptotically equivalent to the original maximum rank correlation
estimator, which is asymptotically normal. The smoothed rank
correlation can be used as if it were a regular smooth criterion
function in the usual M-estimation problem, in the sense that the
standard sandwich-type plug-in variance-covariance estimator is
consistent. Simulation and real data analysis provide additional
evidence that the proposed method gives the right amount of
smoothing.

Because of the family of transformation models contains both the
proportional hazards and accelerated failure time models as its
submodels, the new approach may be used for model selection. The
specification test commonly used in the econometrics literature
(Hausman, 1978) may also be used for testing a specific
semiparametric model assumption. In addition, the smoothed
objective function also makes it possible to fit a penalized
regression by introducing a LASSO-type penalty.

The method and theory developed herein can easily be extended to
other problems of similar nature, i.e. discontinous objective
functions with associated estimators being asymptotically normal.
In particular, we can apply the self-induced smoothing to the
estimator introduced by Chen (2002) for the transformation
function $H$ to obtain a consistent variance estimator.

\appendix

\numberwithin{equation}{section}

\section{Lemmas, Corollaries and Proofs}

Lemma 1 below is due to Sherman (1993, Theorem 2).
\begin{lemma} We denote $\Gamma_n(\btheta)$ as general
objective functions which are centered and
satisfies the same regularity conditions as in Sherman (1993).
Suppose
$\btheta_n:=argmax_{\bTheta}\Gamma_n(\btheta)$
 is  consistent for $\btheta_0$, an
interior point of $\bTheta$. Suppose also that uniformly over $o_p(1)$
neighborhoods of $\btheta_0$,

$\displaystyle \Gamma_n(\btheta)=\frac{1}{2}(\btheta-\btheta_0)'\bA(\btheta-\btheta_0)
+\frac{1}{\sqrt{n}}(\btheta-\btheta_0)'{\bf W}_n+o_p(|\btheta-\btheta_0|^2)+o_p(1/n)$
\smallskip
where $\bA$ is a negative definite matrix, and ${\bf W}_n$ converges
in distribution to a $N({\bf 0}, \bV)$ random vector. Then
$$\sqrt{n}(\btheta_n-\btheta_0)=
-\bA^{-1}{\bf W}_n+o_p(1)\convL N({\bf 0}, \bA^{-1}\bV \bA^{-1}).$$
\end{lemma}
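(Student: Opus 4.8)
The plan is to follow the standard argmax (M-estimation) route that underlies Sherman's Theorem~2: establish a $\sqrt{n}$-rate of convergence, reparametrize to a compact scale, and then transfer the argmax of the exact objective to that of its quadratic approximation. Throughout I would write $u=\sqrt{n}(\btheta-\btheta_0)$ and $\hat u_n=\sqrt{n}(\btheta_n-\btheta_0)$, and record that $u^*:=-\bA^{-1}{\bf W}_n$ is the unique maximizer of the limiting quadratic $M(u)=\frac{1}{2}u'\bA u+u'{\bf W}_n$, since $\bA$ is negative definite. Because $\Gamma_n$ is centered, $\Gamma_n(\btheta_0)=0$.

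First I would establish the rate $\hat u_n=O_p(1)$. By consistency, $\btheta_n-\btheta_0=o_p(1)$, so the hypothesized expansion is valid at $\btheta_n$, and since $\btheta_n$ maximizes $\Gamma_n$ we have $\Gamma_n(\btheta_n)\geq\Gamma_n(\btheta_0)=0$. Negative definiteness of $\bA$ gives $\frac{1}{2}(\btheta_n-\btheta_0)'\bA(\btheta_n-\btheta_0)\leq -c|\btheta_n-\btheta_0|^2$ for some $c>0$, while Cauchy--Schwarz together with ${\bf W}_n=O_p(1)$ bounds the linear term by $O_p(n^{-1/2})|\btheta_n-\btheta_0|$. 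Collecting terms yields $c|\btheta_n-\btheta_0|^2\leq O_p(n^{-1/2})|\btheta_n-\btheta_0|+o_p(|\btheta_n-\btheta_0|^2)+o_p(1/n)$, from which $|\btheta_n-\btheta_0|=O_p(n^{-1/2})$, i.e. $\hat u_n=O_p(1)$.

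Next I would reparametrize and transfer the argmax. Setting $M_n(u)=n\Gamma_n(\btheta_0+u/\sqrt{n})$, the hypothesized expansion, applied uniformly over $o_p(1)$ neighborhoods, translates for each fixed $K$ into
\[\sup_{|u|\leq K}|M_n(u)-M(u)|=o_p(1),\]
the $o_p(|\btheta-\btheta_0|^2)$ and $o_p(1/n)$ remainders both collapsing to $o_p(1)$ after multiplication by $n$ on the compact set $\{|u|\leq K\}$. Since $\hat u_n=O_p(1)$, the maximization effectively takes place on a compact set. Using that $\hat u_n$ maximizes $M_n$ together with this uniform approximation, $M(\hat u_n)\geq M_n(\hat u_n)-o_p(1)\geq M_n(u^*)-o_p(1)=M(u^*)-o_p(1)$. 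But $M(u^*)-M(u)=\frac{1}{2}(u-u^*)'(-\bA)(u-u^*)\geq c|u-u^*|^2$, so $c|\hat u_n-u^*|^2\leq o_p(1)$, giving $\hat u_n=-\bA^{-1}{\bf W}_n+o_p(1)$. Finally, since ${\bf W}_n\convL N({\bf 0},\bV)$ and $\bA$ is symmetric, the continuous mapping theorem and Slutsky's lemma yield $\sqrt{n}(\btheta_n-\btheta_0)\convL N({\bf 0},\bA^{-1}\bV\bA^{-1})$.

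The hard part will be the uniform-approximation step: converting the assumed expansion, stated only to hold uniformly over arbitrary $o_p(1)$ neighborhoods, into genuine uniform control of $M_n-M$ on fixed compact $u$-sets after the $\sqrt{n}$-rescaling, and verifying that both remainder terms remain $o_p(1)$ at that scale. This is precisely the content of Sherman (1993, Theorem~2), whose empirical-process machinery (maximal inequalities for the relevant $U$-process) supplies the required uniformity; the rate step and the strict concavity of $M$ are then routine.
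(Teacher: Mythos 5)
Your proposal is correct, and it is essentially the proof of the result the paper relies on: the paper itself gives no proof of this lemma, attributing it verbatim to Sherman (1993, Theorem 2), and your two-step argument (a $\sqrt{n}$-rate bound from $\Gamma_n(\btheta_n)\geq\Gamma_n(\btheta_0)=0$ together with negative definiteness of $\bA$, followed by rescaling and an argmax-transfer to the quadratic $M(u)=\frac{1}{2}u'\bA u+u'{\bf W}_n$) is exactly the standard route by which that cited theorem is established. One remark: the step you flag at the end as ``the hard part'' is not actually a gap in this lemma. The uniform expansion over $o_p(1)$ neighborhoods is a \emph{hypothesis} of the lemma, not something to be proved; since the ball $\{|\btheta-\btheta_0|\leq K/\sqrt{n}\}$ is such a neighborhood, the hypothesized remainders are uniformly $o_p(K^2/n)+o_p(1/n)=o_p(1/n)$ on it, and multiplying by $n$ yields $\sup_{|u|\leq K}|M_n(u)-M(u)|=o_p(1)$ directly --- no U-process or maximal-inequality machinery is needed inside the lemma. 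That machinery belongs to verifying the hypothesis for the specific objective functions, which is the role of Proposition 1 (and of Sherman's empirical-process analysis) in the paper. The only small detail worth adding to your argmax-transfer step is that $M_n(\hat u_n)\geq M_n(u^*)$ requires $\btheta_0+u^*/\sqrt{n}\in\bTheta$; this holds with probability tending to one because $\btheta_0$ is an interior point and $u^*=-\bA^{-1}{\bf W}_n=O_p(1)$.
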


Recall in Theorem 2, we define
$K_n(\bt,\btheta)=(2\pi)^{-\frac{d}{2}} n^{\frac{d}{2}}\exp(-\frac{n\|\bt-\btheta\|_2^2}{2})$
and its first and second partial derivatives with respect to $\btheta$ as
\begin{eqnarray*}
\begin{split}
& \dot{K}_{n,r}(\bt,\btheta)
=(\frac{2\pi}{n})^{-\frac{d}{2}} n(t_r-\theta_r)e^{-\frac{n\|\bt-\tiny{\theta}\|_2^2}{2}},\\
& \ddot{K}_{n,r,r}(\bt,\btheta)
=(\frac{2\pi}{n})^{-\frac{d}{2}} n(n(t_r-\theta_r)^2-1)e^{-\frac{n\|\bt-\tiny{\theta}\|_2^2}{2}},\\
& \ddot{K}_{n,r,s}(\bt,\btheta)
=(\frac{2\pi}{n})^{-\frac{d}{2}} n^2(t_r-\theta_r)(t_s-\theta_s)e^{-\frac{n\|\bt-\tiny{\theta}\|_2^2}{2}}.
\end{split}
\end{eqnarray*}
Also recall
\[\bOmega_{n,r}=\{\bt: (t_r-\theta_r)^2<4\log n/n,
\sum_{i \neq r}(t_i-\theta_i)^2 < 2(d-1)\log n /n\}.\]
Then we have the following lemma.

\begin{lemma}
Uniformly over $\|\btheta-\btheta_0\|_2=O(n^{-\frac{1}{2}})$,

\noindent
(i) \quad $\displaystyle
\int_{(\bOmega_{n,r}\cap\bOmega_{n,s})^c}F(\bt)\ddot{K}_{n,r,s}(\bt,\btheta)d\bt=o(\frac{1}{\sqrt{n}}),
\forall F(\bt) \mbox{ s.t. }0\leq F(\bt)\leq 1.$\\

\noindent
(ii)\quad $\displaystyle \int_{\bOmega_{n,r}\cap\bOmega_{n,s}} \frac{1}{n}|\ddot{K}_{n,r,s}(\bt,\btheta)|d\bt=O(1)$.\\

\noindent
(iii)\quad $\displaystyle \int_{\bOmega_{n,r}\cap\bOmega_{n,s}} \ddot{K}_{n,r,s}(\bt,\btheta)d\bt=o(n^{-1/2})$.\\

\noindent
(iv) \quad $\displaystyle \int_{\bOmega_{n,r}\cap\bOmega_{n,s}}
\frac{1}{2}(\bt-\btheta)'\bA(\bt-\btheta) \ddot{K}_{n,r,s}(\bt,\btheta)d\bt=[A]_{r,s}+o(n^{-1}).$\\

\noindent
(v)\quad $\displaystyle \mbox{ }\int_{\bOmega_{n,r}^c} |\dot{K}_{n,r}(\bt,\btheta)|d\bt=O(n^{-3/2})$.\\

\noindent
(vi)\quad $\displaystyle \mbox{ }\int_{\bOmega_{n,r}} \frac{1}{\sqrt{n}}|\dot{K}_{n,r}(\bt,\btheta)|d\bt=O(1)$.\\

\noindent
(vii)\quad $\mbox{ For any given }0\leq G(\bt,\bomega)\leq 1, $\\

$\mbox{ }\mbox{ }\mbox{ }\mbox{ }
\displaystyle \int G(\bt,\bomega) \dot{K}_{n,r}(\bt,\btheta)\dot{K}_{n,s}(\bomega,\btheta)d\bt d\bomega$\\
$\mbox{ }\mbox{ }\mbox{ }\mbox{ }\mbox{ }\mbox{ }\mbox{ }\mbox{ }\mbox{ }\mbox{ }\mbox{ }\mbox{ }
\displaystyle =\int_{\bOmega_{n,r}\times\bOmega_{n,s}}
G(\bt,\bomega) \dot{K}_{n,r}(\bt,\btheta)\dot{K}_{n,s}(\bomega,\btheta)d\bt d\bomega + o(n^{-\frac{1}{2}}).$
\end{lemma}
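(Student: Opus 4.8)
The plan is to collapse all seven estimates to moment computations for a fixed standard Gaussian by the substitution $\bz=\sqrt{n}(\bt-\btheta)$ (and, in part (vii), the additional substitution $\tilde\bz=\sqrt{n}(\bomega-\btheta)$). Writing $\phi_d(\bz)=(2\pi)^{-d/2}\exp(-\|\bz\|_2^2/2)$ for the standard $d$-variate normal density, this substitution gives $K_n(\bt,\btheta)\,d\bt=\phi_d(\bz)\,d\bz$, and differentiating under the change of variable yields the compact forms $\dot{K}_{n,r}(\bt,\btheta)\,d\bt=\sqrt{n}\,z_r\,\phi_d(\bz)\,d\bz$, $\ddot{K}_{n,r,s}(\bt,\btheta)\,d\bt=n\,z_r z_s\,\phi_d(\bz)\,d\bz$ for $r\neq s$, and $\ddot{K}_{n,r,r}(\bt,\btheta)\,d\bt=n\,(z_r^2-1)\,\phi_d(\bz)\,d\bz$; the domain $\bOmega_{n,r}$ becomes the fixed set $\tilde\bOmega_{n,r}=\{\bz: z_r^2<4\log n,\ \sum_{i\neq r}z_i^2<2(d-1)\log n\}$. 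Every claim thereby reduces to a weighted normal moment, carrying a polynomial prefactor $\sqrt{n}$ or $n$ from the derivative, taken either over $\tilde\bOmega$ or over its complement. I would prove the bounded-domain estimates (ii) and (vi) first, since they are immediate: the factors $n^{-1}$ and $n^{-1/2}$ cancel the $n$ and $\sqrt{n}$ in the kernel derivatives, leaving $\int_{\tilde\bOmega}|z_r z_s|\phi_d\,d\bz\le E|z_r z_s|=O(1)$ and $\int_{\tilde\bOmega_{n,r}}|z_r|\phi_d\,d\bz\le E|z_r|=O(1)$.

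The tail estimates (v), (i) and (vii) are the core of the lemma. For (v) I would split $\tilde\bOmega_{n,r}^c$ into $\{z_r^2\ge 4\log n\}$ and $\{\sum_{i\neq r}z_i^2\ge 2(d-1)\log n\}$ and control each by a Gaussian or $\chi^2$ tail: on the first, $\int_{z_r^2\ge 4\log n}|z_r|\phi(z_r)\,dz_r=2\phi(2\sqrt{\log n})=O(n^{-2})$ via $\int_a^\infty z\phi(z)\,dz=\phi(a)$, and on the second the free $z_r$-integral is $O(1)$ times $P(\chi^2_{d-1}\ge 2(d-1)\log n)$, which is a negative power of $n$; multiplying by the prefactor $\sqrt{n}$ gives $O(n^{-3/2})$. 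For (i) I would note that, because $0\le F\le1$ is arbitrary, the claim is equivalent to $\int_{(\tilde\bOmega_{n,r}\cap\tilde\bOmega_{n,s})^c}|\ddot{K}_{n,r,s}|\,d\bt=o(n^{-1/2})$, and then repeat the split of (v) with the extra weight $|z_r z_s|$ (case $r\neq s$) or $|z_r^2-1|$ (case $r=s$), using $\int_a^\infty z^2\phi(z)\,dz=a\phi(a)+(1-\Phi(a))$ and Cauchy--Schwarz against the $\chi^2$ tail, so that after the prefactor $n$ the bound is $o(n^{-1/2})$. Estimate (vii) then factorizes: bounding $G\le1$ and writing $(\tilde\bOmega_{n,r}\times\tilde\bOmega_{n,s})^c$ as a union of product sets, it reduces to the product of a complement tail already controlled in (v) and a full-space moment controlled in (vi), giving $O(n^{-1})=o(n^{-1/2})$.

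Estimates (iii) and (iv) are exact-moment computations over the truncated domain, and here the symmetry of $\tilde\bOmega_{n,r}\cap\tilde\bOmega_{n,s}$ under each reflection $z_i\mapsto-z_i$ is the key. For (iii) with $r\neq s$ the integrand $z_r z_s\phi_d$ is odd in $z_r$ and the integral vanishes identically; for $r=s$ I would write $\int_{\tilde\bOmega_{n,r}}(z_r^2-1)\phi_d\,d\bz=-\int_{\tilde\bOmega_{n,r}^c}(z_r^2-1)\phi_d\,d\bz$ using $E[z_r^2-1]=0$ and bound the complement as in (v), so that the prefactor $n$ still leaves $o(n^{-1/2})$. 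For (iv) the factors of $n$ cancel, and after extending the integral to all of $\bRe^d$ at the cost of a tail that is negligible by the same estimates, the statement follows from the Isserlis (Wick) identity $E[z_kz_lz_rz_s]=\delta_{kl}\delta_{rs}+\delta_{kr}\delta_{ls}+\delta_{ks}\delta_{lr}$, which collapses $\tfrac12\sum_{k,l}A_{kl}E[z_kz_lz_rz_s]$ to $[\bA]_{r,s}$ by symmetry of $\bA$ (with the analogous reduction $E[z_kz_l(z_r^2-1)]=2\delta_{kr}\delta_{lr}$ when $r=s$).

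The hard part will be the bookkeeping in the tail estimates (i), (v) and (vii), namely calibrating the truncation radii against the polynomial prefactors $\sqrt{n}$ and $n$ carried by the kernel derivatives. The one-dimensional radius $\sqrt{4\log n}$ is chosen precisely so that the Gaussian tail $P(z_r^2\ge 4\log n)$ is $O(n^{-2})$, which is the budget needed to dominate the factor $n$ and still land in $o(n^{-1/2})$; the off-coordinate radius must likewise be taken large enough that $P(\chi^2_{d-1}\ge\cdot)$ is of the required order, and one must track carefully the polylogarithmic factors generated by the moment weights $|z_s|$ and $z_r^2-1$ so that they do not spoil the rate.
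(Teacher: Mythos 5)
Your overall strategy is the same as the paper's: standardize by $\bz=\sqrt{n}(\bt-\btheta)$, split the complement of the truncation region into a one--dimensional Gaussian tail and a $\chi^2_{d-1}$ tail, and reduce every claim to Gaussian moment bookkeeping; (ii) and (vi) are immediate in both treatments, and your (v), (vii) and the $r\neq s$ case of (iii) mirror the paper's. The genuine deviation is that for (iv) (and for (iii) with $r=s$) you extend the integral to all of $\bRe^d$ and invoke Wick's identity, whereas the paper integrates by parts twice in $t_r$ \emph{inside} the truncated region; and this deviation opens a real gap. In (iv) the discarded tail still carries the quadratic weight $(\bt-\btheta)'\bA(\bt-\btheta)=n^{-1}\bz'\bA\bz$, and it is \emph{not} ``negligible by the same estimates'': on the region $\{z_r^2<4\log n,\ \sum_{i\neq r}z_i^2\ge 2(d-1)\log n\}$ an absolute-value bound yields a contribution of order
\[
E\Bigl[\textstyle\sum_{i\neq r}z_i^2\,;\ \chi^2_{d-1}\ge 2(d-1)\log n\Bigr]
\asymp (\log n)\,P\bigl(\chi^2_{d-1}\ge 2(d-1)\log n\bigr)
\asymp(\log n)^{(d-1)/2}\,n^{-(d-1)},
\]
which for $d=2$ is $(\log n)^{1/2}n^{-1}$, not the $o(n^{-1})$ that (iv) asserts. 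The paper's route never meets this term: after two integrations by parts the quadratic weight has been differentiated away, the main term is $[\bA]_{r,s}$ times the Gaussian probability of the truncation region, the correction is that bare probability (which is $o(n^{-1})$ even at $d=2$), and the boundary terms are $O((\log n)^{3/2}n^{-2})$. To repair your version you must exploit cancellation in the tail as well, e.g.\ integrate over $z_r$ first and use $E[z_r^2-1]=0$ together with oddness --- the very device you already use for (iii) --- so that the $\chi^2_{d-1}$ tail is only ever multiplied by the $O((\log n)^{3/2}n^{-2})$ defect of the truncated $z_r$-moments rather than by an $O(1)$ or $O(\log n)$ quantity.

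Second, the calibration that your last paragraph defers (``the off-coordinate radius must likewise be taken large enough'') is load-bearing and does not in fact come out at the stated radius: $P(\chi^2_{d-1}\ge 2(d-1)\log n)$ is of exact order $(\log n)^{(d-3)/2}n^{-(d-1)}$ (for $d=3$ it equals $n^{-2}$, since $P(\chi^2_2\ge a)=e^{-a/2}$), so the pure absolute-value bounds in (i), (v), (vii) produce the claimed rates $o(n^{-1/2})$, $O(n^{-3/2})$, $o(n^{-1/2})$ only when $d\ge 3$. At $d=2$ the $\chi^2_1$ tail is $\asymp n^{-1}(\log n)^{-1/2}$, so in (i) the off-coordinate region contributes order $(\log n)^{-1/2}$; and since $F$ is arbitrary, no cancellation argument can rescue (i) there. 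This defect is shared by the paper's own proof, whose step ``$\le 8n(\log n)^{3/2}/n^{2}$'' silently presupposes an $O(n^{-2})$ tail, but your proposal rests on exactly this unverified claim and flags it without resolving it. A complete argument must either restrict to $d\ge3$, enlarge the off-coordinate radius (e.g.\ to $4d\log n$, after which all of your tail estimates close with room to spare), or weaken the stated rates to the $o_p(1)$ that the application in Theorem 2 actually requires.
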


\begin{proof} Let
$\tilde\bOmega_{n,r}=\left\{\bt: t_r^2<4\log n/n,
\sum_{i \neq r}t_i^2 < 2(d-1)\log n /n\right\}$,
and divide its complement into
$\tilde \bOmega_{n,r}^{(1)}:=\Big\{\bt: t_r^2>4\log n/n\Big\}$ and
$\tilde \bOmega_{n,r}^{(2)}:=\left\{\bt: t_r^2<4\log n/n,
\sum_{i \neq r}t_i^2 \geq 2(d-1)\log n /n\right\}$.
We prove (i)-(iv) for $s=r$. For $s \neq r$,
the proofs are similar and omitted.

\medskip
For (i), note that
\[\int_{\bOmega_{n,r}^c} F(\bt)\ddot{K}_{n,r,r}(\bt,\btheta)d\bt
=\int_{\tilde\bOmega_{n,r}^c} F(\bt+\btheta)\ddot{K}_{n,r,r}(\bt,{\textbf 0})d\bt.\]

\smallskip
\noindent
Since $0\leq F(\bt)\leq 1$ and $(nt_r^2-1)I[\tilde \bOmega_{n,r}^{(1)}]\geq0$,
\begin{eqnarray*}
\begin{split}
\int_{\tilde\bOmega_{n,r}^{(1)}} F(\bt+\btheta)\ddot{K}_{n,r,r}(\bt,{\textbf 0}) &d\bt
= \int_{\tilde\bOmega_{n,r}^{(1)}} F(\bt+\btheta) (\frac{2\pi}{n})^{-\frac{d}{2}}
n(n t_r^2-1)e^{-\frac{n\|\bt\|_2^2}{2}}d\bt\\
\leq & (\frac{2\pi}{n})^{-\frac{d}{2}} \int_{\tilde\bOmega_{n,r}^{(1)}}
n(n t_r^2-1)e^{-\frac{n\|\bt\|_2^2}{2}}d\bt\\
= & (2\pi)^{-\frac{1}{2}} \int_{\tilde\bOmega_{n,r}^{(1)}} n^{\frac{1}{2}}
d(n t_r e^{-\frac{n t_r^2}{2}}) \prod_{i \neq r} d\Phi(\sqrt{n}t_i)=o(\frac{1}{\sqrt{n}}),
\end{split}
\end{eqnarray*}
where the last equality follows from $\displaystyle 0\leq \int \prod_{i \neq r} d\Phi(\sqrt{n}t_i)\leq 1$. Similarly,
\begin{eqnarray*}
\begin{split}
&\int_{\tilde\bOmega_{n,r}^{(2)}} F(\bt+\btheta)\ddot{K}_{n,r,r}(\bt,{\textbf 0}) d\bt
= \int_{\tilde\bOmega_{n,r}^{(2)}} F(\bt+\btheta) (\frac{2\pi}{n})^{-\frac{d}{2}}
(n^2t_r^2-n)e^{-\frac{n\|\bt\|_2^2}{2}}d\bt      \\
\leq & (2\pi)^{-\frac{1}{2}} \int_{\tilde\bOmega_{n,r}^{(2)}} n |n t_r^2-1 |
e^{-\frac{n t_r^2}{2}}d \sqrt{n}t_r \prod_{i \neq r} d\Phi(\sqrt{n}t_i)
\leq 8\frac{n (\log n)^{3/2}}{n^2}=o(\frac{1}{\sqrt{n}}).
\end{split}
\end{eqnarray*}

For (ii), by definition,
\begin{eqnarray*}\begin{split}
\frac{1}{n}\int_{\bOmega_{n,r}} &|\ddot{K}_{n,r,r}(\bt,\btheta)|d\bt
= (2\pi)^{-\frac{1}{2}} \int_{\tilde\bOmega_{n,r}}
\sqrt{n}|nt_r^2-1| e^{-\frac{nt_r^2}{2}}d t_r \prod_{i \neq r} d\Phi(\sqrt{n}t_i)   \\
\leq & \frac{1}{\pi \sqrt{n}}\left(\left[n t_r e^{-\frac{nt_r^2}{2}}\right]
\bigg|_{t_r=1/\sqrt{n}}^{2\sqrt{\frac{\log n}{n}}}
+ \left[n t_r e^{-\frac{nt_r^2}{2}}\right]
\bigg|_{t_r=1/\sqrt{n}}^{0}\right) = O(1),
\end{split}\end{eqnarray*}
where the inequality follows from $\displaystyle 0\leq \int \prod_{i \neq r} d\Phi(\sqrt{n}t_i) \leq 1$.

For (iii), by definition,
$\displaystyle \int_{\bOmega_{n,r}} \ddot{K}_{n,r,r}(\bt,\btheta)d\bt
= \int_{\tilde\bOmega_{n,r}} (\frac{2\pi}{n})^{-\frac{d}{2}}
(n^2 t_r^2-n)e^{-\frac{n \|t\|_2^2}{2}}d\bt$ \\
$\displaystyle = 2 n^{\frac{3}{2}}t_r e^{-\frac{n t_r^2}{2}}
\bigg|_{t_r=0}^{2\sqrt{\frac{\log n}{n}}}\times \int_{\tilde\bOmega_{n,r}} \prod_{i \neq r} d\Phi(\sqrt{n}t_i)=o(\frac{1}{\sqrt{n}})$,
where the last equality follows from $\displaystyle 0\leq \int \prod_{i \neq r} d\Phi(\sqrt{n}t_i)\leq 1$.\\

For (iv), by definition and applying integration by parts twice,
\begin{eqnarray*}
\begin{split}
& \int_{\bOmega_{n,r}} \frac{1}{2}(\bt-\btheta)'\bA(\bt-\btheta) \ddot{K}_{n,r,r}(\bt,\btheta)d\bt\\
= & \int_{\tilde\bOmega_{n,r}} \frac{1}{2}\bt'\bA\bt
(2\pi)^{-\frac{1}{2}} \sqrt{n} d(-n t_r e^{-\frac{n t_r^2}{2}})
\prod_{i \neq r}d\Phi(\sqrt{n}t_i)                                   \\
= & o(n^{-1}) + \int_{\tilde\bOmega_{n,r}} \bt'\bA {\bf e_r}
(2\pi)^{-\frac{1}{2}} \sqrt{n} d(-e^{-\frac{n t_r^2}{2}})
\prod_{i \neq r}d\Phi(\sqrt{n}t_i)                                          \\
= & \bA_{r,r} + o(n^{-1}),
\end{split}
\end{eqnarray*}
where ${\bf e_r}'=(0,...,0,1,0,...,0)$ with $r^{th}$ entry being 1,
and the last equality follows from the Gaussian tail probability.

For (v), we know, by definition,
$\displaystyle \int_{\bOmega_{n,r}^C} |\dot{K}_{n,r}(\bt,\btheta)|d\bt
= \int_{\tilde\bOmega_{n,r}^C} |\dot{K}_{n,r}(\bt,{\textbf 0})|d\bt$.
By symmetry,
\begin{eqnarray*}
\begin{split}
\int_{\tilde\bOmega_{n,r}^{(1)}} |\dot{K}_{n,r}(\bt,{\textbf 0})| &d\bt
= \frac{2}{\sqrt{2\pi}}\int_{t_r\geq0, \tilde\bOmega_{n,r}^{(1)}} n^{\frac{1}{2}}
d(e^{-\frac{n t_r^2}{2}}) \prod_{i \neq r} d\Phi(\sqrt{n}t_i)\\
&\leq\frac{\sqrt{n}}{n^2}\times1=
O(n^{-\frac{3}{2}}),
\end{split}
\end{eqnarray*}
where the inequality follows from
$\displaystyle 0\leq \int \prod_{i \neq r} d\Phi(\sqrt{n}t_i)\leq 1$.
Similarly,
\begin{eqnarray*}
\begin{split}
\int_{\tilde\bOmega_{n,r}^{(2)}} |\ddot{K}_{n,r,r}(\bt,{\textbf 0})| &d\bt
= \frac{2\sqrt{n}}{\sqrt{2\pi}} \int_{t_r\geq0, \tilde\bOmega_{n,r}^{(2)}}
d(e^{-\frac{n t_r^2}{2}}) \prod_{i \neq r} d\Phi(\sqrt{n}t_i)
\leq\frac{\sqrt{n}}{n^2}=O(n^{-\frac{3}{2}}).
\end{split}
\end{eqnarray*}

For (vi), by definition,
\begin{eqnarray*}
\begin{split}
&\frac{1}{\sqrt{n}}\int_{\bOmega_{n,r}} |\dot{K}_{n,r}(\bt,\btheta)| d\bt
= \int_{\tilde\bOmega_{n,r}} (2\pi)^{-\frac{d}{2}} n^{\frac{d-1}{2}}
n|t_r|e^{-\frac{n\|t\|_2^2}{2}}d\bt\\
= & (2\pi)^{-\frac{1}{2}} 2\int_{t_r\geq0, \tilde\bOmega_{n,r}}
d(e^{-\frac{n t_r^2}{2}}) \prod_{i \neq r} d\Phi(\sqrt{n}t_i)=O(1),
\end{split}
\end{eqnarray*}
where the second equality is due to symmetry and the third equality
follows from $\displaystyle 0\leq \int \prod_{i \neq r} d\Phi(\sqrt{n}t_i)\leq 1$.

To prove (vii), without loss of generality, we assume $0\leq G(\bt,\bomega)\leq 1$.
We denote $\bOmega_{n,r}^a$ as $\bOmega_{n,r}$ and $\bOmega_{n,r}^b$ its complement. Then,
\begin{eqnarray*}
\begin{split}
\int_{\bOmega_{n,r}^k \times \bOmega_{n,s}^l}
& G(\bt,\bomega) \dot{K}_{n,r}(\bt,\btheta)\dot{K}_{n,s}(\bomega,\btheta)d\bt d\bomega\\
& \leq \int_{\bOmega_{n,r}^k} |\dot{K}_{n,r}(\bt,\btheta)|d\bt \times
 \int_{\bOmega_{n,s}^l}|\dot{K}_{n,s}(\bomega,\btheta)|d\bomega \\
& = \int_{\tilde\bOmega_{n,r}^k} |\dot{K}_{n,r}(\bt,{\textbf 0})|d\bt
\times \int_{\tilde\bOmega_{n,s}^l}|\dot{K}_{n,s}(\bomega,{\textbf 0})|d\bomega
\end{split}
\end{eqnarray*}
where $k$ and $l$ are chosen from $\{a,b\}$. Then by (v) and (vi),
\begin{eqnarray*}
\begin{split}
\int G(\bt,\bomega) & \dot{K}_{n,r}(\bt,\btheta)\dot{K}_{n,s}(\bomega,\btheta)d\bt d\bomega\\
&=\int_{\bOmega_{n,r}\times\bOmega_{n,s}}
G(\bt,\bomega) \dot{K}_{n,r}(\bt,\btheta)\dot{K}_{n,s}(\bomega,\btheta)d\bt d\bomega + o(n^{-\frac{1}{2}}).
\end{split}
\end{eqnarray*}
\end{proof}

\begin{lemma} Uniformly over $(\bt,\bomega)$ such that
$\|\bt-\btheta_0\|=o(1)$ and $\|\bomega-\btheta_0\|=o(1)$, we have
\begin{equation}\label{eqn: lemma 3(1)}
\frac{1}{n}\sum_{i=1}^n \left[\frac{\partial\tau(\bu_i,\bt)}{\partial\theta_r}
 \tau(\bu_i,\bomega)\right]
=E\left[\frac{\partial\tau(\bu,\btheta_0)}{\partial\theta_r} \tau(\bu,\btheta_0)\right]+o_p(1),
\end{equation}
\begin{equation}\label{eqn: lemma 3(2)}
\frac{1}{n}\sum_{i=1}^n \left[\frac{\partial\tau(\bu_i,\bt)}{\partial\theta_r}
\frac{\partial \tau(\bu_i,\bomega)}{\partial\theta_s}\right]
=E\left[\frac{\partial\tau(\bu,\btheta_0)}{\partial\theta_r}
\frac{\partial\tau(\bu,\btheta_0)}{\partial\theta_s}\right]+o_p(1).\end{equation}
\end{lemma}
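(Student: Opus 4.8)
The plan is to reduce both \eqref{eqn: lemma 3(1)} and \eqref{eqn: lemma 3(2)} to an ordinary law of large numbers evaluated at $\btheta_0$ plus a uniformly negligible remainder generated by the smoothness of $\tau$. Writing $g_r(\bu;\bt,\bomega)=\frac{\partial\tau(\bu,\bt)}{\partial\theta_r}\,\tau(\bu,\bomega)$ for \eqref{eqn: lemma 3(1)}, I would decompose
\[
\frac{1}{n}\sum_{i=1}^n g_r(\bu_i;\bt,\bomega)
=\frac{1}{n}\sum_{i=1}^n g_r(\bu_i;\btheta_0,\btheta_0)
+\frac{1}{n}\sum_{i=1}^n\big[g_r(\bu_i;\bt,\bomega)-g_r(\bu_i;\btheta_0,\btheta_0)\big].
\]
The first sum is an average of i.i.d.\ terms with finite mean, since $\tau$ is bounded by $1$ so that $|g_r(\bu;\btheta_0,\btheta_0)|\le |\nabla_1|\tau(\bu;\btheta_0)$ and $E(|\nabla_1|\tau)^2<\infty$ by Assumption 3(iii). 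Hence the first sum converges to $E[g_r(\bu;\btheta_0,\btheta_0)]$, which is precisely the target right-hand side.

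It then remains to show the remainder is $o_p(1)$ uniformly over the stated $o(1)$ neighborhood. I would bound the increment by the product rule,
\[
\big|g_r(\bu;\bt,\bomega)-g_r(\bu;\btheta_0,\btheta_0)\big|
\le \Big|\tfrac{\partial\tau(\bu,\bt)}{\partial\theta_r}\Big|\,\big|\tau(\bu,\bomega)-\tau(\bu,\btheta_0)\big|
+\tau(\bu,\btheta_0)\,\Big|\tfrac{\partial\tau(\bu,\bt)}{\partial\theta_r}-\tfrac{\partial\tau(\bu,\btheta_0)}{\partial\theta_r}\Big|.
\]
By a mean-value step and Assumption 3(i), $|\tau(\bu,\bomega)-\tau(\bu,\btheta_0)|\le\|\bomega-\btheta_0\|\sup_{\btheta\in{\cal N}}|\nabla_1|\tau(\bu;\btheta)$, and by Assumption 3(ii) the second increment is at most $\|\bt-\btheta_0\|\sup_{\btheta\in{\cal N}}|\nabla_2|\tau(\bu;\btheta)$, with the suprema over a small neighborhood $\cal N$ controlled by $|\nabla_1|\tau(\bu;\btheta_0)$ and by $|\nabla_2|\tau(\bu;\btheta_0)+C\,M_1(\bu)$ respectively. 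Averaging over $i$ and invoking the SLLN on these envelope averages (integrable by Assumption 3(iii)--(iv) and integrability of $M_1$), the remainder is $O_p(\|\bt-\btheta_0\|+\|\bomega-\btheta_0\|)=o_p(1)$ uniformly, which yields \eqref{eqn: lemma 3(1)}.

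For \eqref{eqn: lemma 3(2)} the same decomposition applies with $h_{rs}(\bu;\bt,\bomega)=\frac{\partial\tau(\bu,\bt)}{\partial\theta_r}\frac{\partial\tau(\bu,\bomega)}{\partial\theta_s}$ in place of $g_r$. The finite-mean step at $\btheta_0$ now uses $E(|\nabla_1|\tau)^2<\infty$ together with the Cauchy--Schwarz inequality, and the increment bound produces two terms each of the form $\big|\tfrac{\partial\tau}{\partial\theta_r}\big|\cdot\big|\tfrac{\partial\tau(\bt)}{\partial\theta_s}-\tfrac{\partial\tau(\btheta_0)}{\partial\theta_s}\big|$, handled exactly as above. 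The main obstacle is this remainder control in the fully differentiated case: one must keep both factors integrable after applying Cauchy--Schwarz, and the correction terms carrying only the first-moment quantities $|\nabla_2|\tau(\btheta_0)$ and $M_1$ survive precisely because they are multiplied by the vanishing radius $\|\bt-\btheta_0\|+\|\bomega-\btheta_0\|=o(1)$, while the leading squared term is tamed by the $L_2$ bound of Assumption 3(iii). An alternative route is to verify that the classes $\{g_r(\cdot;\bt,\bomega)\}$ and $\{h_{rs}(\cdot;\bt,\bomega)\}$ are Glivenko--Cantelli via a bracketing estimate built from the same Lipschitz control of $\partial\tau/\partial\theta_r$; the direct decomposition above is preferable because it uses Assumptions 3(ii)--(iv) in exactly the form stated. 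Everything else reduces to routine applications of the law of large numbers.
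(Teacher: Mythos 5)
Your decomposition is essentially the one the paper uses: reduce both displays to an ordinary LLN at $(\btheta_0,\btheta_0)$ and control the remainder by mean-value/Lipschitz envelopes drawn from Assumption 3. In particular, your second increment, $\tau(\bu,\btheta_0)\,\bigl|\partial\tau(\bu,\bt)/\partial\theta_r-\partial\tau(\bu,\btheta_0)/\partial\theta_r\bigr|$, is exactly the paper's explicit step: the bounded factor $|\tau|\leq 1$ leaves the envelope $\bigl(|\nabla_2|\tau(\cdot;\btheta_0)+C\,M_1\bigr)\,\|\bt-\btheta_0\|$, whose average is $O_p(1)\cdot o(1)$ by integrability (Assumptions 3(ii), 3(iv)) and the SLLN. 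That part is correct and matches the paper verbatim.

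The genuine gap is in the other increment, and in your treatment of \eqref{eqn: lemma 3(2)}. There neither factor is bounded by $1$, so after two mean-value steps your envelopes contain products and squares such as $|\nabla_1|\tau(\btheta_0)\cdot M_1$, $M_1^2$ and $(|\nabla_2|\tau(\btheta_0))^2$, none of which is integrable under the stated assumptions (only first moments are assumed for $|\nabla_2|\tau$ and $M_1$; the second moment in 3(iii) is for $|\nabla_1|\tau$ alone). Your rescue --- that these terms ``survive because they are multiplied by the vanishing radius'' --- is not a valid argument: if $EM_1^2=\infty$, the empirical average $n^{-1}\sum_i M_1(\bu_i)^2$ can diverge (it is only $o(n)$ a.s.), and a radius that is merely $o(1)$ cannot compensate an arbitrary divergence, so uniformity over general $o(1)$ neighborhoods does not follow. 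A repair for \eqref{eqn: lemma 3(1)} is to avoid the second mean-value step altogether: bound $|\partial\tau(\bu,\bt)/\partial\theta_r|$ by the integrable envelope $\sup_{\mathcal{N}}|\nabla_1|\tau\leq|\nabla_1|\tau(\btheta_0)+C\,M_1$, use only $|\tau(\bu,\bomega)-\tau(\bu,\btheta_0)|\leq 2$ together with its pointwise convergence to $0$, and conclude with a dominated-convergence plus monotonicity ULLN (or Cauchy--Schwarz with $(\tau(\bu,\bomega)-\tau(\bu,\btheta_0))^2\leq 2|\tau(\bu,\bomega)-\tau(\bu,\btheta_0)|$). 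For \eqref{eqn: lemma 3(2)} even this is delicate, since no factor is bounded. To be fair, the paper's own proof compresses precisely these steps into ``by a similar argument'' and ``the proof of (2) is similar,'' so it conceals the same difficulty; your more explicit write-up has the virtue of exposing the hole, but the argument offered to fill it does not close it.
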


\begin{proof} We sketch the main steps of the proof below.

First of all, observe that
\begin{eqnarray*}
\begin{split}
\bigg|\frac{1}{n}\sum_{i=1}^n
\left[\frac{\partial\tau(\bu_i,\bt)}{\partial\theta_r}\tau(\bu_i,\bomega)\right]
&-\frac{1}{n}\sum_{i=1}^n
\left[\frac{\partial\tau(\bu_i,\btheta_0)}{\partial\theta_r}\tau(\bu_i,\bomega)\right]\bigg|          \\
&\leq\frac{1}{n}\sum_{i=1}^n
\left[\bigg|\frac{\partial\tau(\bu_i,\bt)}{\partial\theta_r}
-\frac{\partial\tau(\bu_i,\btheta_0)}{\partial\theta_r}\bigg| \times \big|\tau(\bu_i,\bomega)\big|\right]    \\
&\leq \frac{1}{n}\sum_{i=1}^n M_2(\bu_i)\times|\bt-\btheta_0|,
\end{split}
\end{eqnarray*}
where $M_2(\bu)$ is an integrable function.
The last inequality is due to Assumption 3 and $|\tau(\bu,\btheta)|\leq 1$.

Since $M_2(\bu)$ is integrable, by the law of large numbers, the left hand side of above inequality is thus $o_p(1)$.
By a similar argument, we can show that\\
\[\frac{1}{n}\sum_{i=1}^n
\left[\frac{\partial\tau(\bu_i,\bt)}{\partial\theta_r} \tau(\bu_i,\bomega)\right]=
\frac{1}{n}\sum_{i=1}^n
\left[\frac{\partial\tau(\bu_i,\btheta_0)}{\partial\theta_r} \tau(\bu_i,\btheta_0)\right]+o_p(1).\]
By the law of large numbers, we get \eqref{eqn: lemma 3(1)}.
The proof of \eqref{eqn: lemma 3(2)} is similar.
\end{proof}

\begin{lemma} Let $\hat{\bA}_n$ and $\hat{\bV}_n$ be the same as those in \eqref{eqn: var formula}.
Then, for $1 \leq r,s \leq d$, we have
\[
\sup_{\|\theta-\theta_0\|=o(1),\Sigma\in\mathcal{N}( D_0)}
\bigg|\frac{\partial}{\partial \bSigma}\bigg|
[\hat{\bA}_n(\btheta,\bSigma)]_{r,s}=o_p(1),\]
\[\sup_{\|\theta-\theta_0\|=o(1),\Sigma\in\mathcal{N}( D_0)}
\bigg|\frac{\partial}{\partial \bSigma}\bigg|
[\hat{\bV}_n(\btheta,\bSigma)]_{r,s}=o_p(1),\]
where $\mathcal{N}(\bD_0)$ is a small neighborhood of $\bD_0$ and $\bSigma$
is a positive definite matrix.
\end{lemma}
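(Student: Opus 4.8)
The plan is to exploit the fact that $\hat\bA_n$ and $\hat\bV_n$ depend on $\bSigma$ only through the smoothing kernel $K_n(\bt,\btheta)$ of Theorem \ref{thy: var formula}, which is the $N(\btheta,\bSigma/n)$ density. Writing, as in that proof, $[\hat\bA_n(\btheta,\bSigma)]_{a,b}=\int Q_n(\bt)\ddot K_{n,a,b}(\bt,\btheta)\,d\bt$ and using the heat-type identity for the Gaussian density, $\partial K_n/\partial\Sigma_{r,s}=(c_{r,s}/n)\,\partial^2 K_n/\partial t_r\partial t_s$ with $c_{r,s}\in\{\tfrac12,1\}$, differentiation in $\bSigma$ becomes a second spatial differentiation together with an explicit extra factor $1/n$:
\[\frac{\partial}{\partial\Sigma_{r,s}}[\hat\bA_n(\btheta,\bSigma)]_{a,b}
=\frac{c_{r,s}}{n}\int Q_n(\bt)\,\frac{\partial^4 K_n(\bt,\btheta)}{\partial t_a\partial t_b\partial t_r\partial t_s}\,d\bt.\]
Analogously, by the product rule $\partial[\hat\bV_n]_{a,b}/\partial\Sigma_{r,s}$ is a sum of two terms, each carrying a factor $1/n$ and a third-order kernel derivative $\partial^3 K_n/\partial t_a\partial t_r\partial t_s$ in place of the first-order $\dot K_{n,a}$. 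This extra $1/n$ is precisely the gain we must manufacture, so the task reduces to showing that the remaining kernel integrals are $O_p(1)$, uniformly in $(\btheta,\bSigma)$.

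For $\hat\bA_n$ I would reuse the device of the proof of Theorem \ref{thy: var formula}: restrict the integral to $\bOmega_{n,a}\cap\bOmega_{n,b}\cap\bOmega_{n,r}\cap\bOmega_{n,s}$ (the tails being negligible by a fourth-order analogue of Lemma 2(i)) and substitute the quadratic expansion \eqref{eqn: QE of Qn} of $Q_n(\bt)$. The constant, linear, and quadratic pieces of $Q_n$ are polynomials of degree $\le2$ in $\bt$; integrating each against the fourth spatial derivative of $K_n$ and integrating by parts four times annihilates them up to exponentially small boundary terms, since the fourth derivative of a degree-$\le2$ polynomial vanishes. This is the kernel-side image of the Gaussian-moment identity $\int(u^4-3u^2)\phi(u)\,du=0$, which is what forces the leading Taylor coefficients to cancel. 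Only the cubic remainder $O_p(|\bt-\btheta_0|^3)$ and the $o_p(1/n)$ term survive; bounding $\int|\bt-\btheta_0|^3\,|\partial^4 K_n|\,d\bt$ by a higher-order version of Lemma 2(ii) and multiplying by the prefactor $1/n$ gives $o_p(1)$ pointwise.

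For $\hat\bV_n$ I would follow the longer route of that same proof: express the differentiated factor as $(c/n)\int\tau_n(\bu_i,\bt)\,\partial^3 K_n\,d\bt$, reduce to the regions $\bOmega_{n,\cdot}$ by Lemma 2(v)--(vii), and apply Hoeffding's decomposition to the resulting $G_n$-type double integral exactly as in \eqref{eqn: Gnp}--\eqref{eqn: Vnpp}. The degenerate components are again $o_p$ by Sherman (1992, Corollary 8), Lemma 3 handles the projection terms, and the same parity and Gaussian-moment cancellations---now acting on the odd third-order kernel derivative in the sense of \eqref{eqn: split III}---kill the leading contribution, so that together with the explicit $1/n$ the whole expression is $o_p(1)$ pointwise. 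Since $|\partial/\partial\bSigma|$ is a finite sum of $d(d+1)/2$ such terms, the pointwise conclusion then follows for both matrices.

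The final and, I expect, hardest step is to upgrade these pointwise statements to the supremum over $\{\|\btheta-\btheta_0\|=o(1),\ \bSigma\in\mathcal{N}(\bD_0)\}$ demanded by the lemma. Since $\bSigma$ lies near the positive definite $\bD_0$, its eigenvalues are bounded away from $0$ and $\infty$, so $\sigma_{ij}^2\ge\lambda_{\min}\|\bX_{ij}^{(1)}\|_2^2$ and every covariate ratio produced by the differentiation is uniformly bounded; this keeps the kernels and all their $(\btheta,\bSigma)$-derivatives uniformly controlled on the region. I would then differentiate once more in $(\btheta,\bSigma)$, bound the resulting process by the same Lemma 2 type integrals, and pass from a finite net of the bounded Euclidean parameter region to the supremum by a stochastic-equicontinuity argument. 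The genuine difficulty is that the kernels are sample-size dependent, so standard U-process maximal inequalities do not apply and one must rely, as throughout Subsection 2.3, on the delicate kernel-specific integral estimates rather than an off-the-shelf empirical-process bound; this is where the bulk of the technical work lies.
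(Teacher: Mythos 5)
Your core mechanism is sound and, at bottom, is the same Gaussian cancellation the paper uses, just bookkept differently. The paper's proof extends the kernel to $K_n(\bt,\btheta,\bSigma)$ for general positive definite $\bSigma$, repeats the Theorem~2/Lemma~2 manipulations (same truncation sets, same Sherman expansion of $Q_n$, same Hoeffding decomposition for the $\hat{\bV}_n$ part), and shows that after integration by parts the $\bSigma$-derivative of each entry collapses to $[\bA(\btheta_0)]_{r,s}\int_{\tilde\bOmega_{n,r}} \partial K_n(\bt,{\bf 0},\bSigma)/\partial\bSigma\, d\bt+o_p(1)$ (resp.\ $[\bV(\btheta_0)]_{r,s}$ times the same integral); it then kills this single integral by the normalization identity $\int_{\bRe^d}\partial K_n/\partial\bSigma\, d\bt=\partial(1)/\partial\bSigma=0$ together with an $o(1)$ tail bound. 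Your heat identity $\partial K_n/\partial\Sigma_{rs}=(c_{rs}/n)\,\partial^2 K_n/\partial t_r\partial t_s$ is exactly this fact in differential form (integrate it over $\bRe^d$ and you recover the paper's identity), so your ``$1/n$ gain plus fourth-derivative polynomial annihilation'' route is equivalent, only with heavier bookkeeping of third- and fourth-order kernel integrals. Two points deserve correction. First, you misjudge where the difficulty lies: the supremum over $(\btheta,\bSigma)$ is essentially free in the paper, because once the eigenvalues of $\bSigma$ are bounded away from $0$ and $\infty$ on $\mathcal{N}(\bD_0)$ (the paper notes $(\bt-\btheta)(\bt-\btheta)'e^{-\frac{1}{2}(\bt-\btheta)'\bSigma^{-1}(\bt-\btheta)}$ is uniformly bounded there), every Lemma-2-type integral, tail estimate, and the normalization identity holds uniformly in $(\btheta,\bSigma)$; no finite net, no stochastic equicontinuity, and certainly no further differentiation in $\bSigma$ is needed --- that last device risks an infinite regress and adds nothing. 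Second, two small inaccuracies: the boundary/tail terms are polynomially small (the truncation radius is of order $\sqrt{\log n/n}$, so tails are $n^{-c}$), not exponentially small; and your bound on the cubic remainder $O_p(|\bt-\btheta_0|^3)$ yields $o_p(1)$ only when $\|\btheta-\btheta_0\|=o(n^{-1/3})$, since that remainder is not a polynomial and cannot be annihilated by parts --- this covers the case actually needed (the lemma is invoked at $\hat\btheta_n$, which is $O_p(n^{-1/2})$-consistent), but it does not literally give the stated supremum over $o(1)$ neighborhoods; this looseness is shared by the paper's own sketch, so it is a caveat rather than a gap specific to your argument.
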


\begin{proof}
We now extend the definition of kernels in Lemma 2 for any covariance matrix $\bSigma$ as follows,
\[K_n(\bt,\btheta,\bSigma):=(\frac{2\pi}{n})^{-\frac{d}{2}}|\bSigma|^{-1/2}
\exp(-\frac{n}{2}(\bt-\btheta)'\bSigma^{-1}(\bt-\btheta)),\]
where $|\bSigma|$ is the determinant of $\bSigma$.
Then the first and second derivatives of $K_n$ with respect to $\btheta$ become
\begin{eqnarray*}
\begin{split}
& \dot{K}_{n,r}(\bt,\btheta,\bSigma)
:=(\frac{2\pi}{n})^{-\frac{d}{2}}|\bSigma|^{-\frac{1}{2}} n{\bf e_r}'\bSigma^{-1}(\bt-\btheta)
e^{-\frac{n}{2}(\bt-\theta)'\bSigma^{-1}(\bt-\theta)},\\
& \ddot{K}_{n,r,s}(\bt,\btheta,\bSigma) :=(\frac{2\pi}{n})^{-\frac{d}{2}}|\bSigma|^{-\frac{1}{2}} n
{\bf e_r}'\left[n\bSigma^{-1}(\bt-\btheta)(\bt-\btheta)'\bSigma^{-1}-\bSigma^{-1}\right]{\bf e_s}\\
& \qquad\qquad\qquad\qquad\qquad\times e^{-\frac{n}{2}(\bt-\theta)'\bSigma^{-1}(\bt-\theta)}.
\end{split}
\end{eqnarray*}

We partition $\bRe^d$ into $\bOmega_{n,r}$ and its complement $\bOmega_{n,r}^c$, where
$\bOmega_{n,r}:=\Big\{\bt: (\bt-\btheta)'\bSigma^{-1}(\bt-\btheta)<6d\log n /n\Big\}$. Furthermore,
We define $\tilde\bOmega_{n,r}:=\Big\{\bt: \bt'\bSigma^{-1}\bt<6d\log n /n\Big\}.$

\smallskip
Note that $(\bt-\btheta)(\bt-\btheta)' e^{-\frac{1}{2}(\bt-\theta)'\bSigma^{-1}(\bt-\theta)}$ is bounded
for $\bSigma\in\mathcal{N}( D_0)$.
Similar to the proofs of Theorem 2 and Lemma 2, we can get
$$\frac{\partial [\hat \bA_n(\btheta,\bSigma)]_{r,s}}{\partial \bSigma}=
[\bA(\btheta_0)]_{r,s}\int_{\tilde\bOmega_{n,r}} \frac{\partial K_n(\bt,{\textbf 0},\bSigma)}{\partial \bSigma}d\bt+o_p(1),$$
$$\frac{\partial [\hat \bV_n(\btheta,\bSigma)]_{r,s}}{\partial \bSigma}=
[\bV(\btheta_0)]_{r,s}\int_{\tilde\bOmega_{n,r}} \frac{\partial K_n(\bt,{\textbf 0},\bSigma)}{\partial \bSigma}d\bt+o_p(1),$$
uniformly over $(\btheta,\bSigma)$ such that $\|\btheta-\btheta_0\|=o(1)$ and $\|\bSigma-\bD_0\|=o(1)$.

Likewise, we have
$\displaystyle \int_{\tilde\bOmega_{n,r}^c} \frac{\partial K_n(\bt,{\textbf 0},\bSigma)}{\partial \bSigma}d\bt=o(1)$,
which, combined with
$\displaystyle \int \frac{\partial K_n(\bt,{\textbf 0},\bSigma)}{\partial \bSigma}d\bt=0$, implies
$\displaystyle \int_{\tilde\bOmega_{n,r}} \frac{\partial K_n(\bt,{\textbf 0},\bSigma)}{\partial \bSigma}d\bt=o(1).$
This completes the proof.
\end{proof}

\begin{corollary} For $1 \leq r,s \leq d$, we have\\

\centerline{$\displaystyle
\sup_{\|\theta-\theta_0\|=o(1),\Sigma\in\mathcal{N}( D_0)}
\bigg|\frac{\partial}{\partial \bSigma}\bigg|
[\hat{\bA}_n(\btheta,\bSigma)^{-1}]_{r,s}=o_p(1)$.}
\end{corollary}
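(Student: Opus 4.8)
The plan is to reduce the assertion for the inverse matrix to the already-established Lemma 4 for $\hat{\bA}_n$ itself, via the standard identity for differentiating a matrix inverse. First I would record that, for each entry $\Sigma_{i,j}$ of $\bSigma$ and wherever $\hat{\bA}_n(\btheta,\bSigma)$ is invertible,
\[
\frac{\partial}{\partial\Sigma_{i,j}}[\hat{\bA}_n^{-1}]_{r,s}
=-\sum_{p,q}[\hat{\bA}_n^{-1}]_{r,p}\,
\frac{\partial[\hat{\bA}_n]_{p,q}}{\partial\Sigma_{i,j}}\,
[\hat{\bA}_n^{-1}]_{q,s}.
\]
Summing absolute values over $(i,j)$ and applying the triangle inequality then gives
\[
\bigg|\frac{\partial}{\partial\bSigma}\bigg|[\hat{\bA}_n^{-1}]_{r,s}
\le\sum_{p,q}\big|[\hat{\bA}_n^{-1}]_{r,p}\big|\,
\big|[\hat{\bA}_n^{-1}]_{q,s}\big|\,
\bigg|\frac{\partial}{\partial\bSigma}\bigg|[\hat{\bA}_n]_{p,q},
\]
a finite sum (since $d$ is fixed) of products of entries of $\hat{\bA}_n^{-1}$ against the quantities already controlled by Lemma 4.

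Next I would argue that the entries $[\hat{\bA}_n^{-1}]_{r,p}$ are $O_p(1)$ uniformly over the region $\{\|\btheta-\btheta_0\|=o(1),\ \bSigma\in\mathcal{N}(\bD_0)\}$. The proof of Theorem \ref{thy: var formula} establishes $\hat{\bA}_n(\btheta,\bSigma)\convP\bA(\btheta_0)$, and the same argument, extended to the $o(1)$-neighborhood exactly as in Lemma 4, delivers this convergence uniformly over the stated neighborhoods. Since $\bA(\btheta_0)=\tfrac12E\nabla_2\tau(Y,\bX;\btheta_0)$ is strictly negative definite by Assumption 3(v), it is invertible, and by continuity of matrix inversion on the invertible matrices, $\hat{\bA}_n^{-1}$ converges to $\bA(\btheta_0)^{-1}$ uniformly on an event whose probability tends to one; in particular each $|[\hat{\bA}_n^{-1}]_{r,p}|$ is bounded in probability, uniformly over the region.

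Finally I would combine the two pieces. By Lemma 4, $|\partial/\partial\bSigma|[\hat{\bA}_n]_{p,q}=o_p(1)$ uniformly over the region for every $p,q$. Multiplying by the $O_p(1)$ factors $|[\hat{\bA}_n^{-1}]_{r,p}|$ and $|[\hat{\bA}_n^{-1}]_{q,s}|$ and summing the finitely many $(p,q)$ terms yields $o_p(1)$, which is precisely the claim.

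The main obstacle is the second step: securing the uniform-in-$(\btheta,\bSigma)$ boundedness in probability of the inverse entries. This requires upgrading the convergence $\hat{\bA}_n\to\bA(\btheta_0)$ to hold uniformly over the shrinking $\btheta$-neighborhood and the fixed $\bSigma$-neighborhood $\mathcal{N}(\bD_0)$, together with a quantitative margin (a lower bound on the smallest singular value of $\hat{\bA}_n$, equivalently on $|\det\hat{\bA}_n|$) that prevents $\hat{\bA}_n^{-1}$ from blowing up. The strict negative-definiteness in Assumption 3(v) supplies the needed margin for the limit, and the uniform convergence transfers it to $\hat{\bA}_n$ on a high-probability event; once this is in hand, the rest is the routine algebra of the two displays above.
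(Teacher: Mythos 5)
Your proposal is correct and follows essentially the same route as the paper's own proof: the matrix-inverse differentiation identity $d\bA^{-1}=-\bA^{-1}(d\bA)\bA^{-1}$, Lemma 4 to control $\big|\partial/\partial\bSigma\big|[\hat{\bA}_n]_{p,q}$, and uniform convergence of $\hat{\bA}_n(\btheta,\bSigma)$ to the nonsingular $\bA(\btheta_0)$ (via Theorem 2, Lemma 4 and the mean value theorem) to keep the inverse entries bounded in probability. The only difference is that you spell out the final combination step, which the paper dismisses as ``straightforward and thus omitted.''
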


\begin{proof}
First, by Theorem 2, Lemma 4 and the mean value theorem, we can show that
$[\hat{\bA}_n(\btheta,\bSigma)]_{r,s}
=[\hat{\bA}_n(\btheta, \bSigma)-\hat{\bA}_n(\btheta,  \bD_0)
+\hat{\bA}_n(\btheta,  \bD_0)]_{r,s}=[\bA(\btheta_0)]_{r,s}+o_p(1)$.
By matrix differentiation, $d \bA^{-1} = -\bA^{-1}(d\bA)\bA^{-1}$.
Thus $\hat \bA_n^{-1}-\bA_0^{-1}=-\bA_0^{-1}(\hat \bA_n-\bA_0)\bA_0^{-1}+o(\| \hat \bA_n-\bA_0 \|_1)$,
where $\bA_0=\bA(\btheta_0)$.
The rest of the proof is straightforward and thus omitted.
\end{proof}

\begin{lemma}\label{thy: fast convergence}
For $1\leq r, s \leq d$, we have
\[\sup_{\|\theta-\theta_0\|=o(1),\Sigma\in\mathcal{N}( D_0)}
\bigg|\frac{\partial}{\partial \bSigma}\bigg|[\hat{ \bD}_n(\btheta,\bSigma)]_{r,s}=o_p(1).\]
\end{lemma}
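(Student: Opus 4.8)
The plan is to treat $\hat{\bD}_n=\hat{\bA}_n^{-1}\hat{\bV}_n\hat{\bA}_n^{-1}$ as a product of three matrix-valued functions of $\bSigma$ and to differentiate it entrywise by the product rule, controlling each resulting term with the pieces already assembled. Writing the $(r,s)$ entry as $[\hat{\bD}_n]_{r,s}=\sum_{a,b}[\hat{\bA}_n^{-1}]_{r,a}[\hat{\bV}_n]_{a,b}[\hat{\bA}_n^{-1}]_{b,s}$ and applying $\partial/\partial\Sigma_{i,j}$, the derivative of $[\hat{\bD}_n]_{r,s}$ splits into three groups of terms: one in which the left factor $\hat{\bA}_n^{-1}$ is differentiated, one in which $\hat{\bV}_n$ is differentiated, and one in which the right $\hat{\bA}_n^{-1}$ is differentiated. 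Taking absolute values, summing over all $(i,j)$ as required by the operator $|\partial/\partial\bSigma|$, and applying the triangle inequality bounds $|\partial/\partial\bSigma|[\hat{\bD}_n]_{r,s}$ by a finite sum over $a,b$ of three products, each product being one differentiated factor times two undifferentiated factors.

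The second step is to bound the two kinds of factors separately. For the differentiated factors, Corollary 1 gives $\sup|\partial/\partial\bSigma|[\hat{\bA}_n^{-1}]_{r,a}=o_p(1)$ and Lemma 4 gives $\sup|\partial/\partial\bSigma|[\hat{\bV}_n]_{a,b}=o_p(1)$, both uniformly over $\{\|\btheta-\btheta_0\|=o(1),\ \bSigma\in\mathcal{N}(\bD_0)\}$. For the undifferentiated factors, Theorem 2 together with its proof (and Corollary 1) yields $\hat{\bA}_n^{-1}\convP\bA_0^{-1}$ and $\hat{\bV}_n\convP\bV(\btheta_0)$ over the same neighborhood, so each undifferentiated entry is $O_p(1)$. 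Hence every product is of the form $o_p(1)\cdot O_p(1)\cdot O_p(1)=o_p(1)$, and since the sum over $a,b$ (and the preceding sum over $i,j$) has only finitely many terms, the total is $o_p(1)$. This is exactly the claimed bound, and it is also the statement \eqref{eqn: fast convergence} invoked in the remark on the speed of convergence.

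The main obstacle will be ensuring that all the $O_p(1)$ bounds on the undifferentiated factors are genuinely \emph{uniform} over the supremum set, and in particular that $\hat{\bA}_n^{-1}$ stays bounded there; pointwise convergence at $\hat{\btheta}_n$ alone is not enough. I would first upgrade the convergence of $\hat{\bA}_n$ to a uniform statement over $\{\|\btheta-\btheta_0\|=o(1),\ \bSigma\in\mathcal{N}(\bD_0)\}$, reusing the kernel estimates of Lemma 2 and the argument of Theorem 2, which already run uniformly over $\|\btheta-\btheta_0\|=O(n^{-1/2})$ and over $\bSigma$-neighborhoods of $\bD_0$. Since $\bA_0=\bA(\btheta_0)$ is strictly negative definite by Assumption 3(v), this uniform convergence keeps the smallest eigenvalue of $\hat{\bA}_n$ bounded away from zero with probability tending to one, so $\|\hat{\bA}_n^{-1}\|$ is uniformly $O_p(1)$ and the identity $d\bA^{-1}=-\bA^{-1}(d\bA)\bA^{-1}$ used in Corollary 1 applies throughout the neighborhood. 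Once this uniform control is in place, the product-rule bookkeeping of the first two paragraphs goes through verbatim.
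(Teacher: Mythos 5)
Your proposal is correct and is essentially the paper's own argument: the paper's proof of this lemma is a single line, ``follows immediately from Lemma 4 and Corollary 1,'' which is exactly the product-rule bookkeeping you spell out (differentiate $\hat{\bA}_n^{-1}\hat{\bV}_n\hat{\bA}_n^{-1}$ term by term, bound differentiated factors by Corollary 1 and Lemma 4, and undifferentiated factors by their uniform convergence to $\bA_0^{-1}$ and $\bV(\btheta_0)$). Your added attention to the uniformity of the $O_p(1)$ bounds on the undifferentiated factors is a detail the paper leaves implicit (it is supplied by the argument inside the proof of Corollary 1), so your write-up is simply a more explicit version of the same proof.
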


\begin{proof}
The result follows immediate from Lemma 4 and Corollary 1.
\end{proof}

\section{A sufficient condition for Assumption 3}


Suppose $f$ is the joint density for $(\bX,Y)$ and $f(\cdot|\bdr,s)$ is the conditional density of $X^{(2)}$ given $\bX^{(1)}=\bdr$ and $Y=s$. Suppose $g(\cdot|s,\btheta)$ is the conditional density of $\bX'\bbeta(\btheta)$ given $Y=s$ and $g(\cdot|\bdr,s,\btheta)$ is the conditional density of $\bX'\bbeta(\btheta)$ given $\bX^{(1)}=\bdr$ and $Y=s$. By change of variable, $g(\bt|\bdr,s,\btheta)=f(\bt-\bdr'\btheta|\bdr,s)$. Therefore,
\begin{center}
$g(\bt|s,\btheta)=\int g(\bt|\bdr,s,\btheta)G_{\bX^{(1)}|s}(d\bdr)=\int f(\bt-\bdr'\btheta|\bdr,s)G_{\bX^{(1)}|s}(d\bdr)$,
\end{center}
where $G_{\bX^{(1)}|s}$ is the conditional distribution of $\bX^{(1)}$ given $Y=s$. We also observe that,
\begin{center}
$\tau(z,\btheta)=\int_{-\infty}^{x'\beta(\theta)} \int_{-\infty}^{y} g(\bt|s,\btheta)G_Y(ds)d\bt +
\int_{x'\beta(\theta)}^{\infty} \int_{y}^{\infty} g(\bt|s,\btheta)G_Y(ds)d\bt$,
\end{center}
where $G_Y$ is the marginal distribution of $Y$.
Therefore if the conditional density $f_{\bX^{(2)}|\bX^{(1)},Y}(\cdot|\bdr,s)$ has bounded derivatives up to order three for each $(\bdr,s)$ in the support of space $\bX^{(1)}\otimes Y$, it is not difficult to show that Assumption 3 is satisfied.
 The sufficient condition can be easily verified in certain common situations such as
 when the conditional density $f_{\bX^{(2)}|\bX^{(1)},Y}$ is normal.

\end{document}